%
%
%
%

\documentclass[runningheads,a4paper]{llncs}

\setcounter{tocdepth}{3}
\usepackage{graphicx}
\usepackage{amsmath}
\usepackage{amsfonts}
\usepackage{amssymb}

\usepackage[mathscr]{euscript}

\usepackage{amscd}
\usepackage[utf8]{inputenc}
\usepackage{stmaryrd}
\usepackage{xcolor}

\usepackage{url}

\newcommand{\citeonline}[1]{\cite{#1}}

\newtheorem{defn}{Definition}
\newtheorem{prop}[defn]{Proposition}
\newtheorem{theo}[defn]{Theorem}

\newtheorem{fact}[defn]{Fact}

\newcommand{\citey}[1]{\cite{#1}}

\begin{document}

\author{Marlo Souza\inst{1}%
\and Álvaro Moreira\inst{2}}%

\institute{Institute of Mathematics and Statistics, Federal University of Bahia - UFBA,\\
Av. Adhemar de Barros, S/N, Ondina - Salvador-BA, Brazil,\\
\email{marlo@dcc.ufba.br},\\
\and
Institute of Informatics, Federal University of Rio Grande do Sul- UFRGS,\\
Av. Bento Gonçalves, 9500 - Porto Alegre-RS, Brazil\\ 
\email{alvaro.moreira@inf.ufrgs.br}
}

\mainmatter  

\title{Bringing Belief Base Change into Dynamic Epistemic Logic}

\maketitle

\begin{abstract}
AGM's belief revision is one of the main paradigms in the study of belief change operations. In this context, belief bases (prioritised bases) have been primarily used to specify the agent's belief state. While the connection of iterated AGM-like operations and their encoding in dynamic epistemic logics have been studied before, few works considered how well-known postulates from iterated belief revision theory can be characterised by means of belief bases and their counterpart in dynamic epistemic logic. Particularly, it has been shown that some postulates can be characterised through transformations in priority graphs, while others may not be represented that way. This work investigates changes in the semantics of Dynamic Preference Logic that give rise to an appropriate syntactic representation for its models that allow us to represent and reason about iterated belief base change in this logic.
\end{abstract}

\section{Introduction}
\label{sec:intro}

Belief Change is the study of how an epistemic agent comes to change her mind after acquiring new information. While changes in mental attitudes is a well-studied topic in the li\-te\-ra\-tu\-re, 
the integration of such changes as operations within logics of beliefs, obligations, etc. is a somewhat recent development. 


Inspired by the Dutch School, several dynamic logics for information change have been proposed \cite{van2007dynamic,BAL08} which can be connected to the study Belief Change. In particular,  Girard \citey{girard2008modal} proposes Dynamic Preference Logic (DPL) which has been applied to study generalisations of belief revision \textit{a la} AGM \cite{girard2008modal,girard2014belief}. Interestingly, \citeonline{souza:dali} have proposed using DPL as a tool to investigate different classes of belief change operators.

Belief Base Change is the area that studies Belief Change based on syntactic representations of the agent's epistemic commitments. The area arises from Hansson's \citey{Hansson} criticism of the use of deductively closed sets of formulas to represent an agent's epistemic state in the AGM paradigm.Recently, the notion of belief base has been extended into similar (and more expressive) structures such as e-bases \cite{ROT98}, epistemic entrenchments \cite{mreis:igpl14},  priority graphs \cite{liu2011reasoning} and others.

While connections between Belief Base Change and Belief Change have been investigated in the literature \cite{hansson1994kernel,ferme:JLC17,ferme:JSL08}, they rely mainly on one-shot changes, thus, are not able to clarify the behaviour of iterated changes. As such, it is necessary to establish which formal properties these belief base change operations satisfy in a dynamic sense, as studied in the literature of Iterated Belief Change \cite{darwiche,nayak2003dynamic,JIN06}. 

The relationship between priority graphs (and similar structures) and  preference relations has been widely investigated in the literature \cite{andreka2002operators,liu2011reasoning}. Logics relating notions of belief/preference based on syntactic representations, such as priority graphs, and semantic representations, have been studied before by authors such as Levesque~\cite{levesque1984logic}, van Benthem et al.~\cite{liu:deontics} and Lorini~\cite{lorini2018praise}. More yet, it has been shown that several well-known dynamic operators for preference change can be characterised by means of transformations on such structures \cite{andreka2002operators,liu:deontics,souzakr}. 

Souza et al.~\citey{souzaaaai2019} have taken one step in the direction of connecting Belief Base Change and Iterated Belief Change  by  characterising well-known postulates as conditions that must be satisfied by transformations on priority graphs. These authors show, however, that an important class of postulates cannot be represented that way, demonstrating that there is an expressiveness gap between priority graphs and preference models concerning the dynamic aspects of the logic.

In this work, we aim to shorten the expressiveness gap between preference models and priority graphs. We show that for an appropriate restriction of the class of models, we can provide syntactic structures (called grounded priority graphs) able to encode all the information contained in a preference model, i.e. not only the information relative to the ordering among the possible worlds but also extensional information on which worlds are epistemically possible. As such, we can study operations of belief change based on preference models as transformations on grounded priority graphs. As a result,  we obtain the representation of two postulates shown by Souza et al.~\citey{souzaaaai2019} to not be representable in such a manner.  We also obtain a characterisation of relevant priority graph transformations, i.e., transformations that may be used to represent belief change operators, a problem that was left open in previous works, such as \cite{liu2011reasoning,souzakr,souzaaaai2019}. 


This work is structured as follows:  in Section~\ref{sec:ibr}, we present the background theory on Iterated Belief Change; in Section~\ref{sec:dyn}, we present Dynamic Preference Logic, as well as the relevant connections between preference models and priority graphs in the literature. We also prove some fundamental representation results, which strengthen the ones in the literature, based on the definition of preference model we propose. In Section~\ref{sec:iter}, we employ DPL to study Iterated Belief Change operators through transformations on priority graphs. In this section, we provide a characterisation of relevant priority graph transformations, and we use them to characterise important belief change postulates which could not be represented by previous methods. Finally, in Section~\ref{sec:final}, we provide some final considerations, reflecting on the epistemological limitations of our logic, as well as possible future work and applications.

\section{Dynamic and Iterated Belief Change}
\label{sec:ibr}

AGM's work \cite{AGM} focused on postulating minimal requirements for belief change operations in order to describe rational ways of changing one's beliefs.

Among the three basic operations studied by AGM, only expansion can be univocally defined. 
The operations of revision and contraction, on the other hand, are constrained by a set of postulates,  usually referred to as AGM postulates, that define a class of suitable change operators, representing different rational ways in which an agent can change her beliefs. 

It has been argued that AGM's approach lacks a clear semantic interpretation.  Based on Lewis' models for counterfactual reasoning \citey{lewis}, Grove~\citey{grove} provided a possible-world semantics to AGM operations for a (supraclassical and monotonic) logic $\mathcal{L}$.  
He shows that for any belief revision operator $*$ satisfying the AGM postulates and any belief set $B$, there is a system of spheres ${S_B = \langle W, \leq\rangle}$, in which $W$ is a set of models for the language $\mathcal{L}$, such that $w \in \mbox{\textit{Min}}_\leq W$ iff $w\vDash B$ and $\llbracket B*\varphi \rrbracket_{S_B} = \mbox{\textit{Min}}_\leq \llbracket \varphi\rrbracket_{S_B}$. As such, compliance to AGM's postulates can be semantically characterised by postulate \textsc{Faith} below, which states the the minimal worlds on the revised epistemic state of the agent are exactly the minimal worlds satisfying a certain property $\varphi$ that the agent has come to believe to be true,  on changes in Grove's models  \cite{nayak2003dynamic}:
\begin{itemize}
\item[] (\textsc{Faith}) $w \in \textit{Min}_\leq \llbracket \varphi\rrbracket$ iff $w \in \textit{Min}_{\leq_{*\varphi}} W$
\end{itemize}

It has been pointed out that AGM belief revision says very little about how to change one agent's beliefs repeatedly. In fact,  it has been observed that the AGM approach allows some counter-intuitive behaviour in the iterated case \cite{darwiche}. To remedy this deficiency, Darwiche and Pearl \citey{darwiche} propose a set of additional postulates that further constrain the behaviour of revision operators.  
%
%
Furthermore, the authors analyse the proposal by Boutilier~\citey{boutilier1993revision}  of Natural Revision. To model this operation, they propose the postulate of \textit{conditional belief minimisation} \textsc{CB}, which states that the conditional beliefs of the agent (which are not related to the property being revised) are maintained.

\begin{enumerate}
\item[](\textsc{CB}) If $w,w' \not\in \textit{Min}_\leq \llbracket \varphi\rrbracket$, then $w\leq w'$ iff $w \leq_{*\varphi} w'$.
\end{enumerate}

Darwiche and Pearl use this postulate to characterise the operation of Natural Revision showing thus that this is only an example of their broader notion of iterated belief change.

\begin{defn}[Natural Revision]
Let  \mbox{$\leq \,\,  \subseteq  W \times W$} be a plausibility relation and $\varphi$ a propositional formula. The Natural Revision of $\leq$ by $\varphi$ is the plausibility relation \mbox{$\leq_{*\varphi} \,\,   \subseteq W\!\times W$} satisfying \textsc{Faith}  and \textsc{CB}.
\end{defn}

Based on criticism by Freund and Lehman~\citey{lehmanfreund}, Nayak et al. \citey{nayak2003dynamic}, however, show that \textsc{DP} postulates are incompatible with the original AGM postulates. To solve this problem, they  propose the notion of dynamic revision operator, in which a belief revision changes not only the belief set of the agent but the operation itself, i.e., the agent's epistemic state. This distinction between \textit{static} and \textit{dynamic} operators has been observed to be relevant in works such as that of van Benthem \citey{van2007dynamic} and Baltag and Smets \citey{BAL08}, or that of Lindstr{\"o}m and Rabinowicz \citey{lindstrom:rabinowicz}, in which AGM-like static revision can be seen as a counterfactual reasoning while dynamic revision is modelled as an epistemic action changing the agent's epistemic state. 

In this work, we explore the characterisation of Belief Change postulates within Dynamic Preference Logic using both the proof theory of the logic and its characterisation through transformations on belief bases, understood here as priority graphs. To do this, in the following section, we introduce Dynamic Preference Logic, the logic that we will use to reason about Belief Change.

\section{Dynamic Preference Logic}
\label{sec:dyn}

Preference Logic is a modal logic of transitive and reflexive frames. It has been applied to model a plethora of phenomena in Deontic Logic~\cite{liu:deontics}, Epistemic Logic~\cite{BAL08}, etc. Dynamic Preference Logic (DPL) \cite{girard2008modal} is the result of ``dynamifying'' Preference Logic, i.e., extending it with dynamic modalities allowing the study of dynamic phenomena of attitudes such as Beliefs, Obligations, Preferences, etc.

We begin our presentation with the language and semantics of Pre\-fe\-ren\-ce Logic, which we will later ``dynamify''. 
\begin{defn}
Let $P$ be a finite set of propositional letters. We define the language $\mathcal{L}_{\leq}(P)$ by the following grammar (where $p \in P$): $$
\varphi ::= p ~|~ \neg \varphi ~|~ \varphi \wedge \varphi ~|~ A \varphi ~ | ~ [ \leq ] \varphi~ | ~ [<]\varphi$$
\end{defn}

We will often refer to the language $\mathcal{L}_\leq(P)$ simply as $\mathcal{L}_\leq$ by supposing the set $P$ is fixed. Also, we will denote the language of propositional formulas by $\mathcal{L}_0(P)$ or simply $\mathcal{L}_0$. Girard~\citey{girard2008modal} has proposed a semantics for DPL based on Kripke frames with a reflexive and transitive accessibility relation. However, as pointed out earlier, Souza et al.~\citey{souzaaaai2019} show that for this class of models, some belief change operators cannot be represented by means of the manipulation of syntactic representations of these  models, presented later in this section. For this reason, we propose a variation of the notion of preference models that, we show, possess good representational properties from a dynamic point of view.

\begin{defn}\label{def:prefmod}
A conditionally-grounded preference model is a tuple \mbox{$M\!=\!\langle W, \leq, v\rangle$} where $W\subseteq 2^P$ is a set of possible worlds, $\leq$ is a a reflexive and transitive relation over $W$,  and $v: P \rightarrow 2^W$, s.t. $w\in v(p)$ iff $p\in w$, is a valuation function.
\end{defn}

In such a model, the accessibility relation $\leq$ represents an ordering of the possible worlds according to the preferences of a certain agent. As such, given two possible worlds $w,w' \in W$, we say that $w$ is at least as preferred as $w'$ if, and only if, $w\leq w'$. 

Notice that in Definition~\ref{def:prefmod}, we require that $W\subseteq 2^P$, i.e., possible worlds are possible propositional valuations. This requirement has important expressiveness consequences for the logic. Particularly, one expressiveness consequence can be seen in Theorem~\ref{teo:pgraph}, a stronger version of a representation theorem due to Liu~\citey{liu2011reasoning}. 

The choice of restriction in the class of models we present in this work stems from two reasons: these models are more connected with Grove's proposal of models for belief change, in which possible worlds were maximal consistent theories of the logic - in the propositional classical logic case, it is equivalent to valuations - and the fact that, as showed by Andersen et al.~\cite{andersen2017bisimulation}, for conditional logics defined over linear preference model, any (linear) preference model is modally equivalent to a conditionally-grounded (linear) preference model\footnote{The authors consider only linear models in their work and, a priori, it is not clear whether their modal equivalence result can be extended to pre-orders in general. Nevertheless, it indicates that conditionally-grounded models preserve a great deal of conditional information held in general preference models and, as such, constitute an interesting subclass of models to be studied for this logic. Our results in this work only support this conclusion by showing that, for considering this subclass of models, we can obtain interesting representation results that allow computational exploration of DPL in diverse areas.}. From their work, we also know that the logic of degrees of belief, which can be encoded within Preference Logic, is more expressive than the logic of conditional beliefs, which suggests that the restriction on the models has important expressibility consequences for our logic.

Since, in this work, our investigation only concerns those postulates which are defined by means of conditions on Grove models, usually based on the notion of conditional belief, we do not believe this expressive limitation of the logic will be of great concern. We nevertheless discuss these limitations in our Final Considerations. In the following, we will refer to conditionally-grounded preference models simply as preference models, for the sake of the presentation. 

The interpretation of the formulas over these models is defined as usual. Here, we only present the interpretations for the modalities, since the semantics of the propositional connectives is clear. 
They are interpreted as $$\begin{array}{lll}
M,w \vDash A\varphi &\mbox{ iff } &\forall w' \in W:\, M,w'\vDash\varphi\\
M,w \vDash [\leq]\varphi & \mbox{ iff }&\forall w' \in W:\, w'\leq w \Rightarrow M,w' \vDash  \varphi\\
M,w \vDash [<]\varphi & \mbox{ iff }&\forall w' \in W:\, w'< w \Rightarrow M,w' \vDash \varphi
\end{array}$$

As usual, we will refer as $\langle \leq \rangle \varphi$ and $\langle < \rangle \varphi$ to the formulas $\neg [\leq]\neg \varphi$ and  $\neg [<]\neg \varphi$, respectively,  as commonly done in modal logic. Also, given a model $M$ and a formula $\varphi$, we use the notation $\llbracket \varphi\rrbracket_M$ to denote the set of all the worlds in $M$ satisfying $\varphi$. When it is clear which model we are referring to, we will denote the same set by $\llbracket \varphi\rrbracket$. Also, as usual, we will refer as $M\vDash \varphi$ to the fact that for any world $w$ in the model, it holds that $M,w\vDash \varphi$.

As the concept of most preferred worlds satisfying a given formula $\varphi$ will be of great use in modelling some interesting phenomena in this logic, we define a formula encompassing this exact concept.

\begin{defn}\label{def:mu}
We define the formula $\mu\varphi \equiv \varphi \wedge \neg \langle < \rangle \varphi$ that is satisfied by exactly the most preferred worlds satisfying $\varphi$, i.e., $\llbracket \mu \varphi \rrbracket_M = \textit{Min}_\leq \llbracket \varphi\rrbracket_M$.
\end{defn}

\subsection{Preferences and Priorities}
\label{sec:dynamic}

The relation between preference relations and their representations as orderings over formulas (priority orderings, or entrenchment relations) has been extensively studied in the literature \cite{gardenfors1988revisions,georgatos1999preference,andreka2002operators}. Liu~\citey{liu2011reasoning} explore this relationship to propose a syntactic representation of preference models, called priority graphs (or P-graphs for short) which can be used to reason about conditional preferences in DPL \cite{liu:deontics}. With this connection, we will be able to investigate well-known postulates of Iterated Belief Change, defined over preference relations as those in preference models, using operations on priority graphs, thus connecting belief base change operations and iterated belief revision.

\begin{defn} \cite{liu2011reasoning}
Let $P$ be a countable set of propositional symbols and $\mathcal{L}_0(P)$ the language of classical propositional sentences over the set $P$. A priority graph is a tuple \mbox{$G = \langle \Phi, \prec \rangle$} where $\Phi \subset \mathcal{L}_0(P)$, is a finite set of propositional sentences and $\prec$ is a strict partial order on $\Phi$.
\end{defn}

It is easy to see that from a priority graph, we can construct a preference model by taking the preference relation induced by such a graph.

\begin{defn}\label{def:model}
Let $G = \langle \Phi, \prec \rangle$ be a P-graph  and $M = \langle W, \leq, v\rangle$ be a preference model. We say   that $M$ is induced by $G$ iff for any $w,w'\in W$ it holds that
\[
\begin{array}{l}
w \leq w' ~~ \mathit{iff}  ~~~\forall \varphi \in \Phi: ((w' \vDash\varphi \Rightarrow w \vDash\varphi) ~ \mathit{or}\\
\quad\quad\quad\quad\quad\quad\quad\quad\quad\quad    \exists \psi \in \Phi : \psi \prec \varphi, ~ w\vDash \psi,\mathit{and} ~w'\not\vDash\psi)
 \end{array}
 \]
\end{defn}

Clearly, given a P-graph $G = \langle \Phi, \prec \rangle$, the relation $\leq$ satisfying the condition in Definition~\ref{def:model} is reflexive. Notice that it is also transitive since for any words $w,w',w''$, if $w\leq w'$ and $w'\leq w''$ then for any $\varphi''$ that $w''$ satisfies, then either $w'$ satisfies it or there is some $\varphi'$ that $w'$ satisfies and $w''$ doesn't and $\varphi' \prec \varphi$. Similarly, since $w\leq w'$ there is some $\varphi$ that $w$ satisfies and $w'$ doesn't and $\varphi \prec \varphi'$. Notice that, since $\Phi$ is finite, we can take the minimal $\varphi'$ and $\varphi$ for which these properties hold, from which we can conclude that $w''$ cannot satisfy $\varphi$, otherwise either $w \not\leq w'$ or $w'\not\leq w''$. Thus, by transitivity of $\prec$, $w \leq w''$.  

The induction of preference models from P-graphs raises the question about the relations between these two structures. Liu~\citeonline{liu2011reasoning} shows that any preference frame,  i.e., any set of worlds with a reflexive and transitive accessibility relation, is induced by some P-graph. This result cannot be strengthened to preference models, however, since by fixing a certain valuation, it is easy to construct a model for which there is no P-graph that induces it - it suffices to have two worlds $w$ and $w'$ which satisfy the same propositional literals and $w<w'$. Since no propositional formula can distinguish the world $w$ from $w'$, no P-graph can express the order $w<w'$.  Within the class of preference models defined in this work, we can strengthen further this result showing that any conditionally-grounded preference model is induced by a P-graph.

\begin{theo}\label{teo:pgraph}
Any preference model $M = \langle W, \leq,v\rangle$ is induced by a priority graph \mbox{$G=(\Phi,\prec)$}.  
\end{theo}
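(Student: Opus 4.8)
The plan is to read the priority graph straight off the order of $M$, exploiting the fact that in a conditionally-grounded model every subset of $W$ is propositionally definable. Concretely, for each world $w\in W$ let $D_w=\{u\in W: u\leq w\}$ be its down-set, and let $\varphi_w\in\mathcal{L}_0(P)$ be a formula with $\llbracket\varphi_w\rrbracket_M=D_w$. I would then set $G=\langle\Phi,\prec\rangle$ with $\Phi=\{\varphi_w : w\in W\}$ and $\prec=\varnothing$ the empty order, which is trivially a strict partial order, and claim that the preference relation induced by $G$ in the sense of Definition~\ref{def:model} is exactly the given $\leq$. Note that no nontrivial priority is needed: the whole order will be carried by the choice of formulas.

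The first thing to check is that this $G$ is a legitimate priority graph, i.e.\ that each $\varphi_w$ exists and that $\Phi$ is finite. This is where the groundedness hypothesis $W\subseteq 2^P$ does the essential work — precisely the feature whose absence blocks the analogous statement for arbitrary preference models, as the remark preceding the theorem explains. Since $P$ is finite, each world $u\in W$ is pinned down among all valuations by its state description $\delta_u=\bigwedge_{p\in u}p\wedge\bigwedge_{p\in P\setminus u}\neg p$, so that $\llbracket\delta_u\rrbracket_M=\{u\}$; taking $\varphi_w=\bigvee_{u\in D_w}\delta_u$ then yields $\llbracket\varphi_w\rrbracket_M=D_w$ as required (the disjunction is nonempty since $w\in D_w$). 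Finiteness of $\Phi$ is immediate because $W\subseteq 2^P$ is finite.

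It remains to verify that the induced relation coincides with $\leq$. Because $\prec$ is empty, the existential disjunct in Definition~\ref{def:model} can never fire, so $w$ sits below $w'$ in the induced relation iff $w'\vDash\varphi_u$ implies $w\vDash\varphi_u$ for every $u\in W$; unwinding $\llbracket\varphi_u\rrbracket_M=D_u$, this reads $(\forall u\in W)\,(w'\leq u\Rightarrow w\leq u)$. This is equivalent to $w\leq w'$: one direction is just transitivity of $\leq$ (if $w\leq w'$ and $w'\leq u$ then $w\leq u$), and for the converse it suffices to instantiate $u:=w'$ and invoke reflexivity, which delivers $w\leq w'$ directly.

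The argument is short, and the only real subtlety — the crux rather than a genuine obstacle — is the definability step: one must observe that groundedness makes the principal down-sets $D_w$ expressible by propositional formulas, and one must use the down-sets rather than the up-sets, since characterising up-sets under the empty priority would reproduce the converse relation $\geq$ instead of $\leq$. Everything else collapses to reflexivity and transitivity of $\leq$.
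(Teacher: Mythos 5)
Your proof is correct, but it takes a genuinely different route from the paper's. The paper partitions $W$ into clusters of equally-preferred worlds, takes $\Phi$ to be the characteristic formulas $\varphi_{[w]}$ of those clusters (disjunctions of state descriptions of the cluster's worlds), and carries the whole order in the priority relation, setting $\varphi \prec \psi$ exactly when some $\varphi$-world lies strictly below some $\psi$-world; the verification then matches $w \leq w'$ with $\varphi_{[w]} = \varphi_{[w']}$ or $\varphi_{[w]} \prec \varphi_{[w']}$. You do the opposite: all the order goes into the formulas and none into the priority. With $\Phi$ the down-set formulas $\varphi_w$ (definable by exactly the same groundedness-plus-finiteness observation that makes the paper's cluster formulas work) and $\prec = \varnothing$, Definition~\ref{def:model} degenerates to ``every member of $\Phi$ true at $w'$ is true at $w$'', i.e.\ $\forall u\,(w' \leq u \Rightarrow w \leq u)$, which you correctly show equals $\leq$ via transitivity in one direction and reflexivity with the instantiation $u := w'$ in the other; your remark that up-sets would instead reproduce the converse relation is also accurate. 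As for what each approach buys: your construction is shorter and its verification is purely order-theoretic, with nothing to check about $\prec$ beyond vacuous well-formedness; the paper's construction yields a graph mirroring the quotient order of the model --- each world satisfies exactly one formula of $\Phi$ --- and this extra structure is not gratuitous, since the resulting ``canonical P-graph inducing $M$'' is invoked again later (e.g.\ in the proof of Proposition~\ref{prop:rel}). Both constructions depend essentially on the hypothesis $W \subseteq 2^P$ with $P$ finite, which makes every subset of $W$ propositionally definable via state descriptions; you make that dependence explicit, which the paper's proof leaves tacit.
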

\begin{proof}
Take $C_M=\{[w] ~|~ [w] = \{w'\in W \mbox{ s.t. }w'\leq w \mbox{ and } w\leq w'\}\}$, we define the characteristic formula of a cluster $[w]$, the formula $\varphi_{[w]} = \bigvee_{w'\in [w]}\bigwedge_{p\in P}p(w')$ s.t. $p(w')=p$ if $M,w'\vDash p$ and $p(w')=\neg p$, otherwise. With that, construct $G_M =\langle \Phi, \prec\rangle$ on the following way.
\begin{itemize}
\item $\Phi = \{\varphi_{[w]} ~|~[w]\in C_M\}$
\item $\varphi \prec \psi$ iff there are $w,w'\in W$ s.t. $M,w\vDash \varphi$ and $M,w'\vDash \psi$ and $w<w'$.
\end{itemize}
Notice that each world $w$ in the model satisfies exactly one formula of $\Phi$ and only worlds in the same cluster, i.e., equally preferable to each other, satisfy the same formula $\Phi$, since the formula $\varphi_{[w]}$ is a disjunction of the characteristic formulas $\varphi_{w'}$ of the worlds in the cluster $[w]$. As such, it clearly holds that for any $w,w'\in W$, $w\leq w'$ iff either  $\varphi_{[w]} = \varphi_{[w']}$ or $\varphi_{[w]}\prec \varphi_{[w']}$. We call $G_M$ the canonical P-graph inducing $M$. \qed
\end{proof}

Notice that not necessarily two P-graphs that induce the same preference model are equal (or isomorphic in some sense), since, as can be easily seen from Definition~\ref{def:model}, any submodel of a preference model induced by some P-graph $G$ is also induced by $G$. As such, preference models are underdetermined, in a sense, by P-graphs and two preference models with substantially distinct canonical models can be induced by the same P-graph.

\begin{fact}\label{fact:submodels}
Let $G = \langle \Phi, \prec\rangle$ be a P-graph and let also $M = \langle W, \leq, v\rangle$ be a preference model induced by $G$. For any preference model $M' = \langle W',\leq',v'\rangle$ s.t. $W'\subseteq W$, $\leq' \, = \,  \leq_{\rvert_{W'}}$ and $v' \, = \, v_{\rvert_{W'}}$, $M'$ is induced by $G$.
\end{fact}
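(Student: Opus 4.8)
The plan is to observe that the right-hand side of the biconditional in Definition~\ref{def:model} is a relation that depends solely on the propositional valuations of the two worlds involved and on the graph $G$ itself, making no reference whatsoever to the preference ordering. To make this precise I would abbreviate, for an arbitrary preference model $N = \langle V, \preceq, u\rangle$ and worlds $w,w' \in V$, the condition
$$\forall \varphi \in \Phi: \big( (N,w'\vDash\varphi \Rightarrow N,w\vDash\varphi) \text{ or } \exists \psi \in \Phi: \psi \prec \varphi,\ N,w\vDash\psi,\ N,w'\not\vDash\psi \big)$$
by $R_N(w,w')$. With this abbreviation, saying that $N$ is induced by $G$ is exactly saying that $w \preceq w' \iff R_N(w,w')$ holds for all $w,w' \in V$.

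The crucial step I would establish is a locality property: because $\Phi \subset \mathcal{L}_0(P)$ consists only of propositional sentences, the truth of any $\varphi \in \Phi$ at a world $w$ depends exclusively on the propositional valuation at $w$. Since $v' = v_{\rvert_{W'}}$, it follows that for every $w \in W'$ and every $\varphi \in \Phi$ we have $M',w \vDash \varphi$ iff $M,w \vDash \varphi$. Consequently $R_{M'}(w,w')$ and $R_M(w,w')$ are literally the same statement whenever $w,w' \in W'$.

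From here the conclusion follows by chaining equivalences. Fixing $w,w' \in W'$, I would argue that $w \leq' w'$ iff $w \leq w'$ (because $\leq' = \leq_{\rvert_{W'}}$ and both worlds lie in $W'$), that $w \leq w'$ iff $R_M(w,w')$ (because $M$ is induced by $G$), and that $R_M(w,w')$ iff $R_{M'}(w,w')$ (by the locality property). Concatenating these gives $w \leq' w' \iff R_{M'}(w,w')$ for all $w,w' \in W'$, which is precisely the assertion that $M'$ is induced by $G$.

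This argument has no genuine obstacle; the only point worth flagging is its essential reliance on $\Phi$ being a set of \emph{propositional} formulas, which is exactly what guarantees that formula-satisfaction is preserved under passing to a submodel. Were $\Phi$ permitted to contain modal formulas such as $[\leq]\varphi$, restricting the domain to $W'$ could alter their truth values, the equality $R_{M'} = R_M$ on $W'$ would fail, and the fact itself would no longer hold.
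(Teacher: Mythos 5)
Your proof is correct and is exactly the argument the paper has in mind: the paper states Fact~\ref{fact:submodels} without an explicit proof, remarking only that it ``can be easily seen from Definition~\ref{def:model}'', and what is ``easily seen'' is precisely your locality observation that the inducing condition quantifies only over propositional formulas in $\Phi$, whose truth at a world is determined by its valuation and hence unchanged when passing to the submodel. Your closing remark about why the fact would fail if $\Phi$ contained modal formulas is a correct and worthwhile clarification of where the hypothesis $\Phi \subset \mathcal{L}_0(P)$ is used.
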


\section{Iterated Belief Change and Dynamic Preference Logic}
\label{sec:iter}

In this section, we investigate the relationship between the postulates satisfied by iterated belief change operators discussed in Section~\ref{sec:ibr} and their characterisation inside Dynamic Preference Logic. 

We define a dynamic operation on a preference model as any operation that takes a preference model and a formula and changes the preference relation of the model. 

\begin{defn}\cite{souza:dali}
We say $\star$ is a dynamic operator on preference models if for any preference model $M= \langle W,\leq,v\rangle$ and formula $\varphi\in \mathcal{L}_0$, we have that  $\star(M,\varphi) = \langle W, \leq_\star, v\rangle$. In other words, an operation on preference models is called a dynamic operator iff it only changes the relation of the preference model. We will use $M_{\star \varphi}$ to denote the model $\star(M,\varphi)$.
\end{defn}

Given a dynamic operator $\star$, we extend the language $\mathcal{L}_\leq$ with formulas $[\star \varphi]\xi$. Here, we point out some abuse of notation, since we use $\star$ as both a dynamic operator defined as a function and as a symbol in the object language to define the modality $[\star \varphi]$  - which will correspond to the application of this operator $\star$ to the model. 

\begin{defn}
Let $\star$ be a dynamic operator. We define the language $\mathcal{L}_\leq (\star)$ as the smallest set containing $\mathcal{L}_\leq$ and all formulas $[\star \varphi]\xi$, with $\varphi \in \mathcal{L}_0$ and $\xi \in \mathcal{L}_\leq (\star)$. 
\end{defn}

Given a preference model $M=\langle W, \leq, v\rangle$, the semantics of formulas $[\star \varphi]\xi$ of  $\mathcal{L}_\leq (\star)$  is as follows   $$ M,w\vDash [\star \varphi]\xi ~~~\mbox{ iff} ~~~  M_{\star \varphi},w\vDash \xi$$

Notice that, in this work, we are only interested in belief changing operators, i.e., those changing the plausibility the agent attributes to each epistemically possible world, not creating any new knowledge about the world\footnote{As helpfully pointed out by one of the reviewers, since our agents are introspective in the sense that agents know about their beliefs, the belief change operations investigated in this work do change the agent's knowledge, but only in the sense that they change their knowledge about their epistemic state, not their knowledge about the world or current state of affairs. This is an important distinction in the class of operations studied.}.

Liu et al.~\cite{liu2011reasoning,liu:deontics} show that some dynamic belief operators can also be described by means of changes in the priority graphs representing the agent's belief base. In the following,  $\mathbb{G}(P)$ denotes the set of all priority graphs constructed over a set $P$ of propositional symbols. 

\begin{defn}
We call a graph transformation any function $\dagger: \mathbb{G}(P)\times \mathcal{L}_0(P) \rightarrow \mathbb{G}(P)$.
\end{defn}

A P-graph transformation is, thus, a transformation in the agent's belief base, as represented by a priority graph.  Since P-graphs and preference models are translatable into one another, it is easy to connect P-graph transformations and dynamic operators as well. 

\begin{defn}\label{def:induced} \cite{souzaaaai2019}
Let $\star$  be a dynamic operator and $\dagger$ be a P-graph transformation. We say $\star$ is induced by $\dagger$ if for any preference model $M$ and any P-graph $G$, if $M$ is induced by $G$ then the preference model $\star(M,\varphi)$ is induced by the \mbox{P-graph} $\dagger(G,\varphi)$, where  $\varphi$ is any propositional formula in $\mathcal{L}_0(P)$,
\end{defn}

Some difficulties may arise in this connection since the relationship between P-graphs and preference models is not univocal, as exemplified by Fact~\ref{fact:submodels}. We will deal with some of these difficulties in this section through the definition of a more suitable syntactic representation of conditionally-grounded preference models in Subsection~\ref{sub:over}. 

Particularly, it is clear that not all P-graph transformations induce dynamic operators. The reason for this is that, since P-graphs are syntactic representations of preferences, different P-graphs may induce the same preference models. As such, if the P-graph transformation changes these equivalent P-graphs in inconsistent ways, no dynamic operator can satisfy the condition of Definition~\ref{def:induced}. 

For example, a graph transformation that changes the graph $p \prec q$ into the graph $p \prec q$ and the graph $p\wedge q \prec p \wedge \neg q \prec \neg p \wedge q \prec \neg p \wedge \neg q$ into $p\wedge q \prec \neg p \wedge q \prec p \wedge \neg q \prec \neg p \wedge \neg q$ cannot induce any dynamic operator since the original graphs are equivalent, i.e., induce the same models, but the resulting graphs are not. As such, Souza et al.~\citey{souzaaaai2019} define the notion of relevant graph transformation.

\begin{defn}\label{def:relevant} \cite{souzaaaai2019}
We say that  a P-graph transformation $\dagger$ is relevant if there is some dynamic operator $\star$ that is induced by it. 
\end{defn}

\subsection{Characterising relevant graph transformations}

While  some earlier representation results by Liu~\citeonline{liu2011reasoning} guarantee the existence of relevant P-graphs, Souza et al.~\citeonline{souzaaaai2019} did not provide means to identify which graph transformations are relevant or not. To provide such conditions, we need to formalise the notion of equivalence between P-graphs discussed above.

\begin{defn}
Let $G_1$ and $G_2$ be two P-graphs, we say $G_1$ and $G_2$ are $\varphi$-equivalent, symbolically $G_1\equiv_\varphi G_2$, iff they induce the same preference models of a formula $\varphi$, i.e., for any preference model $M$, if $M\vDash \varphi$, then $G_1$ induce $M$ iff $G_2$ induce $M$
\end{defn}

The idea of $\varphi$-equivalence is that two graphs induce the same models when restricted to a certain class of models - represented by a formula $\varphi$. Clearly, it holds that two P-graphs are equivalent (in the sense that they induce the same models) when they are $\top$-equivalent. 

Now, by Proposition~\ref{prop:rel} below, we have that not only relevant graph transformations preserve $\varphi$-equivalence between P-graphs, but this is a sufficient condition for a transformation to be relevant. This is a direct consequence of the Fact~\ref{fact:submodels} presented in Section~\ref{sec:dyn}.

\begin{prop}\label{prop:rel}
Let $\dagger: \mathbb{G}(P)\times \mathcal{L}_0(P) \rightarrow \mathbb{G}(P)$ be a P-graph transformation, the following statements are equivalent:
\begin{enumerate}
\item $\dagger$ is relevant;
\item for any P-graphs $G_1,G_2$ and propositional formula $\varphi$, if $G_1\equiv_\psi G_2$ for some  propositional formula $\psi$, then $\dagger(G_1,\varphi)\equiv_\psi \dagger(G_2,\varphi)$.
\end{enumerate} 
\end{prop}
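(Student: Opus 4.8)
The plan is to recast both conditions in terms of the \emph{unique} relation a P-graph induces on a fixed set of worlds, and then to let Fact~\ref{fact:submodels} mediate between the two. First I would observe that, for a fixed world set $W$ with the grounded valuation forced by Definition~\ref{def:prefmod}, the condition in Definition~\ref{def:model} determines the relation $\leq$ uniquely; hence there is exactly one preference model on $W$ induced by a given P-graph $G$, which I will write $M_G^W$. Then ``$G$ induces $M$'' is the same as ``$M = M_G^W$'', where $W$ is the domain of $M$, and $G_1 \equiv_\psi G_2$ becomes ``$M_{G_1}^W = M_{G_2}^W$ for every $W \subseteq \llbracket \psi\rrbracket$'' (using that $M\vDash\psi$ for propositional $\psi$ means every world of $M$ satisfies $\psi$). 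The key monotonicity ingredient is Fact~\ref{fact:submodels}: since the condition of Definition~\ref{def:model} restricted to $w,w'\in W'$ mentions no worlds outside $W'$, the model $M_G^{W'}$ is exactly the restriction of $M_G^{W}$ to any $W'\subseteq W$; therefore $M_{G_1}^W = M_{G_2}^W$ already forces $M_{G_1}^{W'}=M_{G_2}^{W'}$ for all $W'\subseteq W$.

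With this vocabulary I would first prove the reformulation of relevance that does the real work: $\dagger$ is relevant if and only if, for all $G_1,G_2$, all $\varphi$, and all $W$, $M_{G_1}^W = M_{G_2}^W$ implies $M_{\dagger(G_1,\varphi)}^W = M_{\dagger(G_2,\varphi)}^W$. For necessity, if $\star$ is induced by $\dagger$ and $G_1,G_2$ both induce a model $M$, then $\star(M,\varphi)$ is induced by both $\dagger(G_1,\varphi)$ and $\dagger(G_2,\varphi)$, so these two transformed graphs induce the same model on the domain $W$ of $M$. For sufficiency, Theorem~\ref{teo:pgraph} guarantees that every model $M$ is induced by some $G$, so I can \emph{define} $\star(M,\varphi)$ to be $M_{\dagger(G,\varphi)}^W$; the consistency condition is precisely what makes this well defined, i.e.\ independent of the chosen $G$, and by construction $\star$ changes only the relation, hence is a dynamic operator induced by $\dagger$.

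Given the reformulation, the implication (1)$\Rightarrow$(2) is immediate: assuming $G_1\equiv_\psi G_2$, for each $W\subseteq\llbracket\psi\rrbracket$ we have $M_{G_1}^W = M_{G_2}^W$, so relevance yields $M_{\dagger(G_1,\varphi)}^W = M_{\dagger(G_2,\varphi)}^W$, and ranging over all such $W$ gives $\dagger(G_1,\varphi)\equiv_\psi\dagger(G_2,\varphi)$. For (2)$\Rightarrow$(1) I would fix $G_1,G_2,\varphi,W$ with $M_{G_1}^W=M_{G_2}^W$ and manufacture a witnessing formula: since $P$ is finite, the characteristic formula $\psi_W=\bigvee_{w\in W}\bigwedge_{p\in P}p(w)$ (exactly as in the proof of Theorem~\ref{teo:pgraph}) satisfies $\llbracket\psi_W\rrbracket = W$. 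By the submodel remark, $M_{G_1}^W=M_{G_2}^W$ gives $M_{G_1}^{W'}=M_{G_2}^{W'}$ for every $W'\subseteq W$, that is $G_1\equiv_{\psi_W}G_2$; applying (2) yields $\dagger(G_1,\varphi)\equiv_{\psi_W}\dagger(G_2,\varphi)$, and reading this off at $W'=W$ gives $M_{\dagger(G_1,\varphi)}^W = M_{\dagger(G_2,\varphi)}^W$, which is what relevance requires.

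The step I expect to be the main obstacle is (2)$\Rightarrow$(1): relevance is a ``pointwise in $W$'' consistency statement, whereas $\varphi$-equivalence quantifies over whole families of world sets, so \emph{a priori} the two are not of the same strength. The gap is closed precisely by the characteristic-formula construction, which isolates a single $W$ as $\llbracket\psi_W\rrbracket$, together with Fact~\ref{fact:submodels}, which guarantees that agreement on $W$ propagates for free to all subsets so that the hypothesis $G_1\equiv_{\psi_W}G_2$ genuinely holds. In writing this up I would be careful to verify the finiteness of $P$ (needed for $\psi_W$ to exist) and that the grounded valuation of Definition~\ref{def:prefmod} makes $\llbracket\psi_W\rrbracket$ coincide with $W$ inside every model under consideration.
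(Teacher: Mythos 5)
Your proof is correct and takes essentially the same route as the paper's: for (2)$\Rightarrow$(1) you define $\star(M,\varphi)$ as the model induced by $\dagger(G,\varphi)$ for some graph $G$ inducing $M$ (existence from Theorem~\ref{teo:pgraph}), and establish well-definedness via the characteristic formula of the world set together with Fact~\ref{fact:submodels}, which is exactly the paper's argument with its formula $\psi_M$ and canonical P-graph. Your explicit reformulation through the unique induced model $M_G^W$, and your dropping of the (redundant) conjunction of negated characteristic formulas from $\psi_W$, are only presentational differences.
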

\begin{proof}
Notice that the proof that statement $1$ implies statement $2$ is trivial by Definition~\ref{def:induced}. The other implication follows easily by defining for any model $M =\langle W, \leq,v\rangle$ and propositional formula $\varphi$, the result \mbox{$*(M,\varphi) = M'=\langle W, \leq', v\rangle$} s.t. $\leq'$ is induced by the graph $\dagger(G,\varphi)$, where $G$ is the canonical P-graph inducing $M$. We must show that $\dagger$ induces $\star$, i.e., that for any P-graph inducing $M$, the result of its transformation by $\dagger$ must induce $M'$. 

Take a P-graph $G'$ that induces $M$. Take the formula $$\psi_M = \left(\bigvee_{w\in W} \varphi_w\right) \wedge \left(\bigwedge_{w\in 2^P\setminus W} \neg \varphi_w\right),$$ where given a propositional valuation $w$, $$\psi_w = \left(\bigwedge_{p\in w}p\wedge\bigwedge_{p\in P\setminus w}\neg p\right).$$ Clearly, for any model $M' = \langle W', \leq', v'\rangle$, $M' \vDash \varphi_M$ iff $W'\subseteq W$. As such, by Fact~\ref{fact:submodels}, it is easy to see that $G\equiv_{\psi_M}G'$. Since $\dagger$ preserves $\psi_M$-equivalence, then $\dagger(G,\varphi)\equiv_{\psi_M}\dagger(G',\varphi)$. Since dynamic operators don't change the set of possible worlds, clearly $M'\vDash \psi_M$ and, as such, it must be induced by $G'$.\qed
\end{proof}

While the notion of $\varphi$-equivalence clarifies the necessary behaviour for a graph transformation to be relevant, it is yet not completely clear how to decide whether a transformation preserves $\varphi$-equivalence. Since, by Fact~\ref{fact:submodels}, we know that a P-graph induces all submodels of some model it induces, to verify if two graphs are $\varphi$-equivalent, it suffices to verify if they induce the same model in the limiting case, i.e., the model containing all and only the valuations that satisfy $\varphi$.

\begin{prop}\label{prop:equiv}
Let $G_1$ and $G_2$ be P-graphs and let $\varphi \in \mathcal{L}_0$ be a propositional formula. We have that $G_1 \equiv_\varphi G_2$ iff there is some preference model $M = \langle \llbracket \varphi \rrbracket_0, \leq, v\rangle$, where $\llbracket \varphi \rrbracket_0$ stands for the set of all propositional valuations satisfying $\varphi$, s.t. $M$ is induced both by $G_1$ and $G_2$.
\end{prop}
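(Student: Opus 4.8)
The plan is to prove the biconditional by establishing each direction separately, leaning on two facts already available: that Definition~\ref{def:model} determines the induced preference relation uniquely once the graph and the carrier set are fixed, and that a priority graph induces every submodel of a model it induces (Fact~\ref{fact:submodels}).

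For the left-to-right direction I would argue by construction. Assuming $G_1 \equiv_\varphi G_2$, let $\leq$ be the relation induced by $G_1$ on the full carrier $\llbracket \varphi \rrbracket_0$; this is well-defined and, by the reflexivity and transitivity argument following Definition~\ref{def:model}, a preorder, so $M = \langle \llbracket \varphi \rrbracket_0, \leq, v\rangle$ is a genuine preference model induced by $G_1$. Because every world of $M$ lies in $\llbracket \varphi \rrbracket_0$ and $\varphi$ is propositional, we have $M \vDash \varphi$; the definition of $\varphi$-equivalence then immediately yields that $M$ is induced by $G_2$ as well, so $M$ is the desired witness.

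For the right-to-left direction I would transfer the inducing relationship to arbitrary $\varphi$-models via Fact~\ref{fact:submodels}. Given a witness $M = \langle \llbracket \varphi \rrbracket_0, \leq, v\rangle$ induced by both $G_1$ and $G_2$, take any preference model $M' = \langle W', \leq', v'\rangle$ with $M' \vDash \varphi$; since $\varphi$ is propositional this forces $W' \subseteq \llbracket \varphi \rrbracket_0$. Suppose $M'$ is induced by $G_1$. For $w,w' \in W' \subseteq \llbracket \varphi \rrbracket_0$ the relation $w \leq' w'$ is governed by exactly the same condition on $G_1$ as $w \leq w'$ in $M$, and the valuations are fixed by the worlds themselves, so $\leq' = \leq_{\rvert_{W'}}$ and $v' = v_{\rvert_{W'}}$. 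Thus $M'$ is a submodel of $M$, and as $M$ is induced by $G_2$, Fact~\ref{fact:submodels} gives that $M'$ is induced by $G_2$. The symmetric argument handles the converse, so $G_1 \equiv_\varphi G_2$.

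The only genuinely delicate step is the claim, in the second direction, that an arbitrary $\varphi$-model induced by $G_1$ coincides with the restriction of the witness $M$ to its own worlds. This rests on the uniqueness of the induced relation noted at the outset: once $G_1$ and the carrier set are fixed, Definition~\ref{def:model} leaves no freedom in $\leq'$, so agreement with $\leq$ on $W'$ is forced. Everything else is a routine application of the definition of $\varphi$-equivalence together with Fact~\ref{fact:submodels}.
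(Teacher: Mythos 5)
Your proof is correct and follows essentially the same route as the paper's (sketched) argument: both rest on the observation that Definition~\ref{def:model} determines the induced relation pointwise once the graph and carrier set are fixed, together with Fact~\ref{fact:submodels}, with the model over the full carrier $\llbracket \varphi \rrbracket_0$ serving as the witness. The only cosmetic difference is that the paper phrases this via the canonical induced model over $2^P$ restricted to the $\varphi$-worlds, whereas you work directly over $\llbracket \varphi \rrbracket_0$ and spell out the submodel transfer in both directions, which makes your version more detailed than the paper's sketch.
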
 
\begin{proof}[Sketch of the proof]
It suffices to see that, given the restriction in the class of models considered in Definition~\ref{def:prefmod}, any model is composed by a subset of propositional valuations in $2^P$. As such, for any model $M = \langle W, \leq, v\rangle$ induced by a P-graph $G$, it must be the case that there is a canonical induced model $M_G = \langle 2^P, \leq_G, v_G\rangle$ induced by $M_G$ s.t. $\leq\subseteq \leq_G$ - by Definition~\ref{def:model}. From there, it is easy to see that $G_1 \equiv_\varphi G_2$ iff $M_{G_1} = M_{G_2}$, as defined above, considering only those valuations that satisfy $\varphi$ in the canonical induced model.

\end{proof}

With this result, we have a tool to verify whether a P-graph transformation is relevant, i.e., induces some dynamic operator. Notice that, since the number of preference models for a finite set $P$  of propositional symbol is also finite, and since there is only a finite amount of semantically distinct propositional formulas over this symbol set, verifying if a graph transformation is relevant is decidable (and, in fact, exponential on the number of propositional symbols in $P$).

\subsection{Overcoming P-graphs expressibility gaps}\label{sub:over}

Souza et al.~\citey{souza:dali} show that the proof theory of DPL can be used to characterise Belief Change postulates. More yet, they study which postulates can also be represented by  conditions on graph transformations, such as \textsc{DP}-1. We say a graph transformation satisfies a postulate if all dynamic operators induced by it satisfy this postulate. 

To characterise belief change postulates using graph transformations, Souza et al.~\citey{souzaaaai2019} provide a set of constraints on transformations that guarantee satisfaction to some postulates. 

Unfortunately, the authors also show that some important postulates from Iterated Belief Revision cannot be characterised by P-graph transformations, as is the case of \textsc{CB}\footnote{Other interesting examples have been previously provided by Souza et al.~\citey{souzakr}, showing that some iterated contraction operators cannot be characterised by P-graphs transformations, unless when restricted to a special class of preference models, which they call broad models.}. However, Natural Revision, which satisfies postulates \textsc{Faith} and \textsc{CB}, is definable in DPL.

\begin{fact}\label{fact:CB}\cite{souzaaaai2019}
No relevant P-graph transformation $\dagger: \mathbb{G}(P)\times \mathcal{L}_0(P) \rightarrow \mathbb{G}(P)$ satisfies both \textsc{Faith} and \textsc{CB}.
\end{fact}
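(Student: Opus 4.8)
The plan is to argue by contradiction, reducing everything to a single three-world witness. Suppose $\dagger$ is relevant and satisfies both \textsc{Faith} and \textsc{CB}. By relevance (Definition~\ref{def:relevant}) there is a dynamic operator $\star$ induced by $\dagger$, and since $\dagger$ satisfies the two postulates, so does $\star$; that is, $\star$ is Natural Revision. The engine of the argument is the observation that any operator induced by a graph transformation must \emph{commute with passage to submodels}. Indeed, if a P-graph $G$ induces a model $M$ and $W'$ is a subset of its worlds, then by Fact~\ref{fact:submodels} the submodel $M'$ on $W'$ is also induced by $G$; applying $\dagger$, both $\star(M,\varphi)$ and $\star(M',\varphi)$ are induced by $G'=\dagger(G,\varphi)$. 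Because the relation a P-graph induces on a fixed world set is uniquely pinned down by the ``iff'' of Definition~\ref{def:model} — and that condition is pointwise, referring only to the two worlds compared — the relation induced on $W'$ is exactly the restriction of the one induced on the larger set. Hence we are forced to have $\leq'_{\star\varphi} \, = \, (\leq_{\star\varphi})_{\rvert_{W'}}$, where $\leq_{\star\varphi}$ and $\leq'_{\star\varphi}$ are the revised relations of $\star(M,\varphi)$ and $\star(M',\varphi)$. It remains to exhibit a model, a submodel, and a formula for which Natural Revision breaks this restriction law.

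I would take $P=\{p,q\}$ and the conditionally-grounded model $M$ (Definition~\ref{def:prefmod}) whose worlds are the valuations $a=\{p\}$, $b=\{q\}$, $c=\{p,q\}$, ordered as the chain $a<b<c$, and revise by $\varphi=q$. Then $\llbracket \varphi\rrbracket_M=\{b,c\}$ and $\textit{Min}_\leq\llbracket\varphi\rrbracket_M=\{b\}$. Since $a,c\notin\textit{Min}_\leq\llbracket\varphi\rrbracket_M$, postulate \textsc{CB} forces $a\leq_{\star\varphi}c$, because $a\leq c$ holds in $M$. Now pass to the submodel $M'$ on $W'=\{a,c\}$; by Theorem~\ref{teo:pgraph} and Fact~\ref{fact:submodels} it is induced by the same canonical P-graph $G_M$ that induces $M$. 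In $M'$ we have $\llbracket\varphi\rrbracket_{M'}=\{c\}$ and therefore $\textit{Min}_{\leq}\llbracket\varphi\rrbracket_{M'}=\{c\}$, so \textsc{Faith} forces $c$ to be the unique minimal world of $\leq'_{\star\varphi}$; consequently $c<'_{\star\varphi}a$ and so $a\not\leq'_{\star\varphi}c$.

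These two facts collide with the commutation law: it demands $a\leq'_{\star\varphi}c$ iff $a\leq_{\star\varphi}c$, yet the right-hand side is true while the left-hand side is false. This contradiction establishes that no relevant $\dagger$ can satisfy both postulates. I expect the main obstacle to be purely one of care rather than depth: first, checking that \textsc{Faith} and \textsc{CB} genuinely force the two specific order-facts I rely on ($a\leq_{\star\varphi}c$ in $M$ and its failure in $M'$) without my having to compute the full revised relations; and second, justifying precisely that ``induced on the submodel'' coincides with ``restriction of the induced relation'', which is exactly Fact~\ref{fact:submodels} combined with the pointwise, world-local form of Definition~\ref{def:model}. The conceptual heart of the matter is that Natural Revision reshuffles the order according to which $\varphi$-worlds are globally minimal, and deleting the world $b$ changes that set from $\{b\}$ to $\{c\}$; a syntactic, world-local P-graph transformation cannot track such a global, extensional quantity uniformly across a model and its submodels, which is precisely the expressiveness gap the paper sets out to expose.
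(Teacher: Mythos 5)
Your argument is correct, and it is worth noting that the paper itself gives no proof of this fact: it is imported from the cited prior work, with only an informal gloss that operators pinned down by \emph{minimal} $\varphi$-worlds cannot be tracked by P-graphs because a P-graph does not determine the minimal worlds of its induced models. Your proof is a faithful, self-contained reconstruction of exactly that mechanism. The load-bearing step --- that any $\star$ induced by $\dagger$ must commute with passage to submodels --- is sound: by Fact~\ref{fact:submodels} both $M$ and its submodel $M'$ are induced by the same $G$, so both $\star(M,\varphi)$ and $\star(M',\varphi)$ are induced by $\dagger(G,\varphi)$, and since the ``iff'' in Definition~\ref{def:model} determines the relation on a fixed world set uniquely and pointwise, the revised relation on $W'$ is forced to be the restriction of the revised relation on $W$. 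Your three-world witness then checks out: in $M$ with $a<b<c$ and $\varphi=q$, both $a,c\notin\textit{Min}_\leq\llbracket q\rrbracket=\{b\}$, so \textsc{CB} forces $a\leq_{\star\varphi}c$; in $M'=\{a,c\}$, $\textit{Min}_\leq\llbracket q\rrbracket_{M'}=\{c\}$, so \textsc{Faith} makes $c$ the unique minimum and hence $a\not\leq'_{\star\varphi}c$ (this holds under either standard reading of minimality, since $\leq'_{\star\varphi}$ is reflexive and $W'$ has only two elements), contradicting the restriction law. Note also that three worlds is the minimum for this collision --- you need a world outside $\llbracket\varphi\rrbracket$ to pair with $c$ for \textsc{CB} to bite --- so your witness is as economical as possible; the only cosmetic slip is the claim that $\star$ ``is'' Natural Revision, when all you need (and all you use) is that $\star$ satisfies the two forced order-facts, since \textsc{Faith} and \textsc{CB} alone do not obviously determine the full relation between minimal and non-minimal worlds. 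One point your write-up implicitly relies on and could make explicit: the guarantee that $M$ is induced by \emph{some} P-graph is exactly Theorem~\ref{teo:pgraph}, which is available here because the paper restricts to conditionally-grounded models; in the unrestricted setting of the cited original, one would instead have to start from a graph and exhibit two induced models directly.
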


The reason for such a result is that some dynamic operators are defined by means of the minimal worlds in the model satisfying some property. It has been shown, however,  that such a property cannot be encoded employing P-graphs \cite{souzakr}. As such, to overcome the expressiveness gap between dynamic operators and graph transformation, we propose a restriction on the class of models in DPL semantics (corresponding to the requirement that $W\subseteq 2^P$ in Definition~\ref{def:prefmod}) and, below, a variation of the notion of priority graph which, together, guarantee that transformations on priority graphs can appropriately characterise these postulates. 

\begin{defn}\label{def:groundedpgraph}
We call grounded P-graph a structure $G = \langle \varphi, \Phi, \prec\rangle$, s.t. $\varphi\in \mathcal{L}_0$ is a consistent propositional formula and $\langle \Phi, \prec\rangle$ is a P-graph. We also say that $G$ is grounded by $\varphi$.
\end{defn}

The main reason for the lack of expressiveness of P-graphs to define dynamic operations is that a P-graph encodes only information on the structure of the accessibility relation of the models induced by it. The extension of such a model, i.e., which worlds are indeed possible, is not encoded within a P-graph. As such, from the structure of the P-graph, it is not possible to determine which worlds are minimal in the induced models. In Definition~\ref{def:groundedpgraph}, we complement a P-graph with the information $\varphi$, which will be used to define exactly which worlds exist in the induced models.

\begin{defn}\label{def:model2}
Let $G = \langle \varphi, \Phi, \prec\rangle$ be a grounded P-graph, we say that the preference model $M= \langle W, \leq, v\rangle$ is induced by $G$ iff $W = \llbracket \varphi\rrbracket_0$ and $\leq$ is induced by the P-graph $\langle \Phi, \prec\rangle$.
\end{defn}  

Notice that, by Definition~\ref{def:model2}, a grounded P-graph induces exactly one preference model in which the possible worlds are exactly the propositional valuations satisfying the formula $\varphi$ which grounds it. With that addition, we can determine, based solely on the structure of a grounded P-graph, what are the minimal worlds in the induced model that satisfy some formula. To construct a graph-based codification for a formula representing such worlds, based on the work of \citeonline{liu:deontics}, we will use the notion of maximal paths in a graph. 

\begin{defn}
\label{eq:mupath}
Let $G = \langle \varphi,\Phi,\prec\rangle$ be a grounded P-graph, $\sigma=\langle \xi_1, \cdots,\xi_n \rangle$ be a maximal chain of nodes in $G$, and let $\psi$ be a propositional formula, we define the formula $\mu_{\sigma} (\psi)$ representing the minimal worlds in the induced model satisfying $\psi$ as $\mu_\sigma (\psi) =\mu_\sigma^n (\psi)$, where:
$$\mu_\sigma^i (\psi) = 
\begin{cases}
\varphi \wedge \psi & \mbox{if }i=0\\
\xi_i\wedge\mu_\sigma^{i-1} (\psi)  &\mbox{if }i>0\mbox{ and }\xi_i\wedge\mu_\sigma^{i-1} (\psi)\not\vdash \bot\\
\mu_\sigma^{i-1} (\psi)  &\mbox{otherwise}
\end{cases}$$ 
\end{defn}

Notice that in Definition~\ref{eq:mupath}, the formula $\mu_sigma^i(\psi)$ corresponds to the maximal (ordered) conjunction of formulas $\xi_j$ in the path $\sigma = \langle \xi_1, \cdots, \xi_n\rangle$ that is consistent with $\varphi$ and $psi$. Since all possible worlds satisfying $\varphi$ exist in any model $M$ induced by a grounded P-graph $G = \langle \varphi,\Phi,\prec\rangle$, if there is a minimal world in $M$ satisfying $\psi$, it must satisfy the formula $\mu_sigma(\psi)$ for some maximal path $\sigma$ in $G$. With that, it is easy to construct a propositional formula from a grounded P-graph that is satisfied exactly by the minimal $\psi$-worlds in the model induced by it.

\begin{prop}\label{prop:minimal}
Let $G = \langle \varphi,\Phi,\prec\rangle$ be a grounded P-graph, let $\Sigma$ be the set of all maximal chains of nodes in $G$,  and let $M = \langle W, \leq, v\rangle$ be the preference model induced by $G$. For any world $w\in W$, it holds that
$$M,w\vDash \mu_G(\psi) ~~ \mbox{ iff }  ~~ w\in \textit{Min}_\leq \llbracket \psi\rrbracket$$
where $\mu_G(\psi) = \bigvee_{\sigma \in \Sigma} \mu_\sigma(\psi)$
\end{prop}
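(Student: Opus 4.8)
The plan is to prove the biconditional by reducing it to a statement about a single chain and then controlling the disjunction. Since $\mu_G(\psi) = \bigvee_{\sigma\in\Sigma}\mu_\sigma(\psi)$, we have $M,w\vDash\mu_G(\psi)$ iff $M,w\vDash\mu_\sigma(\psi)$ for some maximal chain $\sigma$; so it suffices to describe, for each $\sigma$, exactly which worlds satisfy $\mu_\sigma(\psi)$, and then to match the union of these sets against $\textit{Min}_\le\llbracket\psi\rrbracket$.

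First I would fix a chain $\sigma=\langle\xi_1,\dots,\xi_n\rangle$ and record two facts about the greedy formula. (a) By an easy induction on $i$, each $\mu_\sigma^i(\psi)$ is equivalent to the conjunction of $\varphi$, $\psi$ and the sub-collection of $\{\xi_1,\dots,\xi_i\}$ selected by the greedy test; in particular $\mu_\sigma(\psi)$ is inconsistent exactly when $\varphi\wedge\psi$ is, which disposes of the degenerate case where there are no $\psi$-worlds. (b) Because $W=\llbracket\varphi\rrbracket_0$ contains \emph{every} valuation of $\varphi$, the selection test ``$\xi_i\wedge\mu_\sigma^{i-1}(\psi)\not\vdash\bot$'' is equivalent to the semantic statement that some world of $M$ satisfies $\psi$, all the previously selected $\xi_j$, and $\xi_i$. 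These two facts let me reinterpret $\mu_\sigma(\psi)$ as the description of the lexicographically least $\psi$-profile along $\sigma$: a world satisfies $\mu_\sigma(\psi)$ iff, reading $\xi_1,\dots,\xi_n$ in priority order, it satisfies each $\xi_i$ whenever doing so is jointly consistent with its earlier commitments and with the existence of a $\psi$-world. Writing $\le_\sigma$ for the total lexicographic preorder that prefers satisfying higher-priority formulas, this is exactly the claim $\llbracket\mu_\sigma(\psi)\rrbracket=\textit{Min}_{\le_\sigma}\llbracket\psi\rrbracket$, which I would isolate as a lemma and prove by induction on $i$.

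It then remains to show $\textit{Min}_\le\llbracket\psi\rrbracket=\bigcup_\sigma\textit{Min}_{\le_\sigma}\llbracket\psi\rrbracket$. For the inclusion $\supseteq$ (soundness) I would first prove the easy half of an Andr\'eka--Ryan--Schobbens-style representation: for every chain $\sigma$, $\,\le\,\subseteq\,\le_\sigma$. This follows from Definition~\ref{def:model}: if $w\le u$ and $\xi_i$ is the first place along $\sigma$ where $w$ and $u$ disagree, the defining clause forces every formula on which $u$ beats $w$ to be dominated by a strictly higher-priority formula on which $w$ beats $u$; such a formula occurs earlier along $\sigma$, contradicting the choice of $\xi_i$, so $w$ must beat $u$ at $\xi_i$, i.e. $w\le_\sigma u$. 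Given this, if $w\in\textit{Min}_{\le_\sigma}\llbracket\psi\rrbracket$ and some $\psi$-world $u$ had $u<w$, then $u\le w$ yields $u\le_\sigma w$, while minimality gives $w\le_\sigma u$; since $\le_\sigma$-ties are precisely equalities of truth value on all of $\Phi$, $u$ and $w$ would agree on $\Phi$ and hence be $\le$-equivalent, contradicting $u<w$. So every world satisfying some $\mu_\sigma(\psi)$ is $\le$-minimal.

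The main obstacle is the converse inclusion $\subseteq$ (completeness): given a $\le$-minimal $\psi$-world $w$, I must exhibit one chain $\sigma$ for which $w$ is $\le_\sigma$-least, i.e. a priority-respecting enumeration along which $w$ lexicographically beats or ties every $\psi$-world. The natural candidate is the enumeration that, among the currently $\prec$-minimal unplaced formulas, always places one satisfied by $w$ before one not satisfied by $w$, thereby ``front-loading'' the formulas in $w$'s favour as far as $\prec$ permits. Proving that this $\sigma$ works is the delicate point: one supposes some $\psi$-world $u$ strictly $\le_\sigma$-beats $w$, examines the first disagreement $\xi_i$ (necessarily one where $u$ wins and which, by the construction, can have been placed early only because every formula $w$ satisfies was still $\prec$-blocked), and must convert this into a genuine $\le$-domination $u<w$, contradicting minimality. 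Turning the ``blocked'' structure produced by the construction into the witnessing higher-priority formulas demanded by the $\le$-clause is exactly where the combinatorics of the partial order enters, and I expect this existence-of-a-favourable-chain lemma to be the technically hardest step; once it is in place, completeness, and hence the proposition, follow at once.
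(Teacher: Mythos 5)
The paper states Proposition~\ref{prop:minimal} without proof, so there is nothing to compare your attempt against; it has to stand on its own, and as written it does not. Your decomposition into (i) $\llbracket\mu_\sigma(\psi)\rrbracket=\textit{Min}_{\leq_\sigma}\llbracket\psi\rrbracket$ per chain and (ii) $\textit{Min}_\leq\llbracket\psi\rrbracket=\bigcup_\sigma\textit{Min}_{\leq_\sigma}\llbracket\psi\rrbracket$ is natural, and (i) is fine (the greedy consistency test is classical, and $W=\llbracket\varphi\rrbracket_0$ supplies witnesses). But your soundness argument for (ii) rests on the assertion that $\leq_\sigma$-ties are ``precisely equalities of truth value on all of $\Phi$'', which holds only if $\sigma$ contains \emph{every} node of $\Phi$ --- i.e.\ only when $\prec$ is total. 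A maximal chain of a genuinely partial $\prec$ omits nodes, and then the step, and the inclusion it is meant to prove, fail. Concretely, take $G=\langle\top,\{p,q\},\emptyset\rangle$, whose maximal chains are $\langle p\rangle$ and $\langle q\rangle$, and $\psi=\top$: then $\mu_G(\top)\equiv p\vee q$, but by Definition~\ref{def:model} the valuation making both $p$ and $q$ true is strictly below the one making only $p$ true, so $\textit{Min}_\leq W=\llbracket p\wedge q\rrbracket$. The $p\wedge\neg q$ valuation is $\leq_{\langle p\rangle}$-minimal and satisfies $\mu_{\langle p\rangle}(\top)$, yet is not $\leq$-minimal. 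Note this refutes not just your lemma but the proposition itself under the literal ``maximal chains of $(\Phi,\prec)$'' reading; both become correct only if $\Sigma$ is taken to consist of full priority-respecting enumerations (linear extensions) of $\Phi$ --- exactly the hypothesis your tie argument silently uses. You should make that reading explicit (or restrict to total $\prec$) before anything else in your plan can work.

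Even under the linear-extension reading, your completeness direction is a sketch with a real hole, as you half-admit. The front-loading rule (``among currently $\prec$-minimal unplaced formulas, prefer one satisfied by $w$, else place any'') underdetermines the enumeration precisely at the critical stages, and the arbitrary choice can be wrong. Take $G=\langle\neg a,\{a,b,c\},\prec\rangle$ with $a\prec b$ and $c$ incomparable to both, $\psi=\neg(b\wedge c)$, and $w$ the valuation with $b$ true and $a,c$ false: $w$ is $\leq$-minimal among $\psi$-worlds, and at the first stage both $a$ and $c$ are available non-$w$ formulas; an enumeration starting with $c$ lets the $c$-world defeat $w$ lexicographically at the very first position (and that defeater does not $\leq$-dominate $w$, since it fails the escape clause at $b$), while starting with $a$ protects $w$. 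Relatedly, your proposed conversion of a lexicographic defeat by $u$ into $u<w$ does not go through with $u$ itself: the first-disagreement formula need not be $\prec$-below $w$'s later wins, and the ``blocked'' $\prec$-smaller witnesses your construction produces need not be satisfied by $u$. So a correct proof must either choose the enumeration as a function of all potential defeaters (or establish existence of a favourable linear extension by induction), and it is here --- in manufacturing witness worlds --- that the conditional grounding $W=\llbracket\varphi\rrbracket_0$ has to be invoked, not merely in your fact (b). As it stands, the technically decisive step of your plan is absent, and the first step is built on a false identity.
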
 

Proposition~\ref{prop:minimal} shows that, differently then what happens for P-graphs by Fact~30 of \cite{souzakr}, grounded P-graphs completely define the minimal worlds of their induced conditionally-grounded models. Since the aforementioned fact is used to prove that  well-known contraction operators, as well as postulates \textsc{CB} and \textsc{Faith}, cannot be represented by P-graph transformations, if we consider grounded P-graphs and conditionally-grounded P-graphs, we can provide a representation result for these postulates - and other similar postulates characterised by minimal worlds in a model.  

We do point out that, in this work, we do not allow grounded P-graph transformations to change the grounding of the graph, similar to the condition that dynamic operators cannot change the set of possible worlds of the model. Formally we would need to redefine notions such as P-graph transformation to reinforce this restriction. For space constraints, however, we will refrain from doing so.

Finally, we can characterise \textsc{CB} using both the proof theory of DPL and grounded P-graph transformations.

\begin{prop}\label{prop:CB2}
Let $\star$ be a dynamic operator on preference models satisfying \textsc{Faith}. The operator $\star$ satisfies \textsc{CB} if, and only if, the axiom schemata below are valid in  $\mathcal{L}_\leq(\star)$.
\vspace{-0.2cm}
$$
\begin{array}{lcl}
{}[\star\varphi]p &\leftrightarrow& p\\
{}[\star\varphi](\xi \wedge \xi)&\leftrightarrow& [\star\varphi]\xi \wedge [\star\varphi]\xi\\
{}[\star\varphi]\neg \xi&\leftrightarrow&\neg [\star\varphi]\xi\\
{}[\star\varphi]A \xi &\leftrightarrow& A [\star\varphi] \xi\\
{}[\star\varphi][\leq]\xi & \leftrightarrow & A(\mu \varphi\rightarrow [\star \varphi]\xi)\wedge (\neg \mu \varphi \rightarrow [\leq][\star\varphi]\xi)\\
{}[\star\varphi][<]\xi &\leftrightarrow& \mu\varphi \vee \neg \mu \varphi \rightarrow (A ( \mu\varphi \rightarrow  [\star\varphi]\xi) \wedge [<][\star\varphi]\xi)\\
{}[\leq][\star \varphi]\xi &\rightarrow & \neg\mu\varphi \rightarrow ([\star \varphi][\leq](\neg \mu\varphi \rightarrow \xi))\\
{}[<][\star \varphi]\xi &\rightarrow & \neg\mu\varphi \rightarrow ([\star \varphi][<](\neg \mu\varphi \rightarrow \xi))\\
\end{array}$$
\end{prop}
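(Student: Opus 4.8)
The plan is to prove the biconditional as two implications under the standing hypothesis that $\star$ satisfies \textsc{Faith}. First I would dispatch the four purely boolean/global axioms, which hold for \emph{every} dynamic operator, independently of \textsc{CB}, precisely because $\star$ changes only the relation and leaves $W$ and $v$ untouched. Axiom~$1$ follows since $M_{\star\varphi}$ and $M$ share the valuation; axioms~$2$ and~$3$ hold because $[\star\varphi]$ is interpreted by the single transformed model $M_{\star\varphi}$, hence behaves as a functional modality that distributes over $\wedge$ and commutes with $\neg$; and axiom~$4$ holds because the global modality $A$ quantifies over the unchanged domain $W$. None of these needs case analysis.

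For the direction assuming \textsc{CB}, the central step is to read off from \textsc{Faith} and \textsc{CB} an explicit shape for the revised relation $\leq_\star$: the minimal $\varphi$-worlds $\textit{Min}_\leq\llbracket\varphi\rrbracket$ become the global minimum of $\leq_\star$, while every other pair keeps its $\leq$-order, i.e. $w'\leq_\star w$ iff $w'\in\textit{Min}_\leq\llbracket\varphi\rrbracket$ or ($w\notin\textit{Min}_\leq\llbracket\varphi\rrbracket$ and $w'\leq w$). The second clause is exactly \textsc{CB}, and \textsc{Faith} together with transitivity places the $\mu\varphi$-worlds below everything; \emph{this is the step that genuinely exploits the plausibility-model structure, and I expect establishing it cleanly to be the main obstacle}. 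With this description, each of axioms~$5$–$8$ reduces to unfolding $M,w\vDash[\star\varphi]\theta$ as $M_{\star\varphi},w\vDash\theta$, rewriting $M_{\star\varphi},w'\vDash\xi$ as $M,w'\vDash[\star\varphi]\xi$, and splitting on whether $w\in\textit{Min}_\leq\llbracket\varphi\rrbracket$. In axiom~$5$ the down-set $\{w' : w'\leq_\star w\}$ equals $\textit{Min}_\leq\llbracket\varphi\rrbracket$ when $w$ is minimal and $\textit{Min}_\leq\llbracket\varphi\rrbracket\cup\{w' : w'\leq w\}$ otherwise, which is precisely the right-hand side expressed through $A$, $\mu\varphi$ and $[\leq]$; axiom~$6$ is the strict analogue, whose down-set is empty at minimal worlds, which explains the case-split shape $\mu\varphi\vee(\neg\mu\varphi\rightarrow\cdots)$. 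For axioms~$7$ and~$8$ I would point out that only one implication can hold because the guard $\neg\mu\varphi$ sits \emph{under} $[\star\varphi]$, so $\mu\varphi$ is evaluated in $M_{\star\varphi}$; here one needs $\textit{Min}_{\leq_\star}\llbracket\varphi\rrbracket=\textit{Min}_\leq\llbracket\varphi\rrbracket$, and the conclusion constrains only the non-minimal worlds below $w$, discarding the information about the relocated minimal worlds, so the converse genuinely fails and the axiom is stated with $\rightarrow$.

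For the converse — recovering \textsc{CB} from the validity of the axioms — the key idea is that in conditionally-grounded models every world $w\in W\subseteq 2^P$ is the unique satisfier of its characteristic conjunction $\psi_w=\bigwedge_{p\in w}p\wedge\bigwedge_{p\notin w}\neg p$ (exactly the formulas used in the proof of Proposition~\ref{prop:rel}). Consequently $w\leq w'$ is detected by $M,w'\vDash\langle\leq\rangle\psi_w$, and $w\leq_\star w'$ by $M,w'\vDash[\star\varphi]\langle\leq\rangle\psi_w$. I would fix $w,w'\notin\textit{Min}_\leq\llbracket\varphi\rrbracket$ and rewrite $[\star\varphi]\langle\leq\rangle\psi_w=[\star\varphi]\neg[\leq]\neg\psi_w$ using axiom~$3$ to pull $[\star\varphi]$ through the negation, axiom~$5$ to reduce $[\star\varphi][\leq]\neg\psi_w$, and axioms~$1$–$3$ to collapse $[\star\varphi]\neg\psi_w$ to $\neg\psi_w$ since $\psi_w$ is propositional. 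Evaluating at $w'$: as $\neg\mu\varphi$ holds there, the $[\leq]$-conjunct survives and amounts to $w\not\leq w'$, while the conjunct $A(\mu\varphi\rightarrow\neg\psi_w)$ holds precisely because $w\notin\textit{Min}_\leq\llbracket\varphi\rrbracket$. Hence $M,w'\vDash[\star\varphi]\langle\leq\rangle\psi_w$ iff $w\leq w'$, that is $w\leq_\star w'$ iff $w\leq w'$ for all $w,w'\notin\textit{Min}_\leq\llbracket\varphi\rrbracket$, which is exactly \textsc{CB}. Beyond the delicate shape-of-$\leq_\star$ lemma, the remaining difficulty is purely the bookkeeping of the case analyses in axioms~$5$–$8$; the $\psi_w$ trick makes the converse direction short once the conditionally-grounded restriction is invoked.
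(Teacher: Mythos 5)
Your proposal is correct and follows essentially the same route as the paper's own proof: the forward direction rests on exactly the shape of $\leq_{\star}$ the paper extracts from \textsc{Faith} and \textsc{CB} (minimal $\varphi$-worlds become the global minimum, all other pairs keep their $\leq$-order), and the converse uses the paper's own device of characteristic formulas $\psi_w$ for worlds together with the equivalence $\xi \equiv [\star\varphi]\xi$ for propositional $\xi$ to read $\leq$ and $\leq_\star$ off the axioms. You merely work out in more detail (e.g.\ the explicit $[\star\varphi]\langle\leq\rangle\psi_w$ computation and the explanation of why axioms $7$--$8$ are one-directional) what the paper leaves as a sketch, and you rightly flag the shape-of-$\leq_\star$ lemma as the delicate step — the paper asserts it with the same brevity.
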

\begin{proof}[Sketch of the Proof:]
The first implication is straight-forward, by showing that each axiom holds for any preference model $M$ and dynamic operation $\star$ satisfying \textsc{CB} and \textsc{Faith}. It suffices to notice that for any world $w$, if it is a minimal $\varphi$-world, it will become a minimal world in the revised model $M_{\star \varphi}$, by \textsc{FAITH}. If it is not, for any world $w'$ s.t. $w' \leq w$, it holds that $w' \leq_\star w$, by \textsc{CB}, and for any minimal $\varphi$-world $w'$, $w' < w$ by \textsc{FAITH}.

The other implication can be easily shown by observing that for any world in a preference model (in fact, for any set of worlds) there is a propositional formula $\xi$ that is satisfied only by this world (the worlds in this set). With that, and the fact that for any propositional formula $\xi$ and dynamic operator $\star$ it holds in DPL that $\xi \equiv [\star \varphi]\xi$, it is easy to use the axioms above to show that if $\star$ satisfies the postulates above, then it satisfies \textsc{CB}  and \textsc{Faith}.
\end{proof}

Postulate \textsc{CB} can be represented by means of grounded P-graph transformations in the following way.

\begin{prop}
Let $\dagger:\mathbb{G}(P)\times\mathcal{L}_0(P) \rightarrow \mathbb{G}(P)$ be a relevant grounded P-graph transformation. $\dagger$ satisfies \textsc{CB} iff for all grounded P-graph $G=\langle \varphi, \Phi, \prec\rangle$ and propositional formula $\psi\in \mathcal{L}_0$, it holds that $\dagger(G,\psi)$ is $\varphi$-equivalent to the grounded P-graph $G' = \langle \varphi, \Phi_\dagger, \prec_\dagger\rangle$ satisfying:
\begin{enumerate}
\item For all $\xi \in \Phi$, there is some $\xi' \in \Phi_\dagger$ s.t.
\begin{enumerate} 
\item $\varphi \wedge \xi \equiv \varphi \wedge \xi'$ and
\item $\forall \alpha' \in \Phi_\dagger$, if $\alpha' \prec_\dagger \xi'$ then
\begin{enumerate}
\item[] $\alpha'\equiv \mu_G(\psi)$ or
\item[] there is $\alpha \in \Phi$ s.t. $ \varphi \wedge \alpha \equiv  \varphi \wedge \alpha'$ and $\alpha\prec \xi$;
\end{enumerate}
\end{enumerate}
\item For all $\xi' \in \Phi_\dagger$, $\xi' \equiv \mu_G(\psi)$ or there is some $\xi \in \Phi$ s.t.
\begin{enumerate}
\item  $ \varphi \wedge \xi \equiv  \varphi \wedge \xi'$ and  
\item   $\forall \alpha \in \Phi$, if $\alpha \prec \xi$ then there is $\alpha' \in \Phi_\dagger$ s.t.
\begin{enumerate}
\item[] $\varphi \wedge \alpha \equiv  \varphi \wedge \alpha'$ and
\item[] $\alpha'\prec_\dagger \xi'$.
\end{enumerate}
\end{enumerate}
\end{enumerate}
\end{prop}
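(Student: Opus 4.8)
The plan is to reduce the whole statement to a comparison of induced preference relations. By Definition~\ref{def:model2} a grounded P-graph $\langle\varphi,\Phi,\prec\rangle$ fixes the carrier of its induced model to $\llbracket\varphi\rrbracket_0$, so two grounded P-graphs sharing the grounding $\varphi$ are $\varphi$-equivalent exactly when their underlying P-graphs induce the same relation on $\llbracket\varphi\rrbracket_0$ (this is the grounded reading of Proposition~\ref{prop:equiv}). Hence the clause ``$\dagger(G,\psi)\equiv_\varphi G'$'' just asserts that $\dagger(G,\psi)$ and $G'$ yield the same order over $\llbracket\varphi\rrbracket_0$. Since $\dagger$ is relevant, I would fix a dynamic operator $\star$ induced by it (Definition~\ref{def:relevant}); for $M$ induced by $G$, the model $M_{\star\psi}$ is the one induced by $\dagger(G,\psi)$, and by Proposition~\ref{prop:minimal} the formula $\mu_G(\psi)$ is satisfied in $M$ by exactly the worlds of $\textit{Min}_\leq\llbracket\psi\rrbracket$. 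These three facts turn the proposition into a local statement about how the relation on $\llbracket\varphi\rrbracket_0$ changes, separated into the $\mu_G(\psi)$-worlds and the rest.

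For the direction $(\Leftarrow)$, assume each $\dagger(G,\psi)$ is $\varphi$-equivalent to some $G'=\langle\varphi,\Phi_\dagger,\prec_\dagger\rangle$ satisfying (1)--(2), and let $\star$ be any operator induced by $\dagger$. I would compute, via Definition~\ref{def:model}, the order $\leq_\star$ induced by $G'$ and compare it with $\leq$ induced by $G$. The decisive point is that any two worlds $w,w'\notin\textit{Min}_\leq\llbracket\psi\rrbracket$ falsify $\mu_G(\psi)$, so by Definition~\ref{def:model} the node $\mu_G(\psi)$ never contributes to their comparison: their order in $G'$ is decided solely by the nodes of $\Phi_\dagger$ that correspond modulo $\varphi$ to nodes of $\Phi$. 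Conditions (1b) and (2b) guarantee that this correspondence reproduces $\prec$ exactly, so the defining clause of Definition~\ref{def:model} gives $w\leq w'$ iff $w\leq_\star w'$, which is precisely \textsc{CB}. As this holds for every induced operator, $\dagger$ satisfies \textsc{CB}.

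For $(\Rightarrow)$, assume $\dagger$ satisfies \textsc{CB} and fix an induced $\star$. I would build the witness $G'$ explicitly: let $\Phi_\dagger$ consist of the nodes of $\Phi$ together with $\mu_G(\psi)$, retain $\prec$ among the former, and place $\mu_G(\psi)$ below exactly those nodes under which $\star$ moves the minimal $\psi$-worlds in $M_{\star\psi}$. Conditions (1) and (2) then hold by construction, the only genuinely new node being $\mu_G(\psi)$. It remains to check that $G'$ induces $M_{\star\psi}$: for worlds outside $\textit{Min}_\leq\llbracket\psi\rrbracket$, \textsc{CB} keeps their order unchanged, matching the preserved $\prec$-part of $G'$; for the minimal $\psi$-worlds, Proposition~\ref{prop:minimal} identifies them as the $\mu_G(\psi)$-worlds, and the chosen placement of $\mu_G(\psi)$ reproduces their revised position. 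Proposition~\ref{prop:equiv} then delivers $\dagger(G,\psi)\equiv_\varphi G'$.

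The main obstacle I expect is the bookkeeping inside Definition~\ref{def:model} once the extra node $\mu_G(\psi)$ is present. One must verify that introducing $\mu_G(\psi)$ perturbs the comparison of no pair of worlds that both fail $\mu_G(\psi)$, so that \textsc{CB} is captured neither too weakly nor too strongly, and that the existential clause ``$\exists\rho\prec\chi$'' behaves correctly with the duplicated nodes, given that a single world may satisfy several formulas of the graph at once. Isolating exactly the worlds affected by $\mu_G(\psi)$ --- which is where Proposition~\ref{prop:minimal} does the decisive work --- and showing that the correspondence demanded in (1)--(2) transports the order clause of Definition~\ref{def:model} faithfully is the technical heart of the argument.
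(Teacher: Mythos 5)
Your right-to-left direction is essentially correct, and your own diagnosis of where the bookkeeping lives is accurate: for a pair of worlds both falsifying $\mu_G(\psi)$, a node equivalent to $\mu_G(\psi)$ can act neither as a failing constraint (it is false at the upper world) nor as a witness (it is false at the lower world) in the clause of Definition~\ref{def:model}, while (1a)/(2a) together with (1b)/(2b) transport failing constraints and witnesses back and forth between $G$ and $G'$ for all worlds of the carrier $\llbracket\varphi\rrbracket_0$ (where the $\varphi$-relativised equivalences collapse to coincidence of extensions). By Proposition~\ref{prop:minimal} these pairs are exactly those outside $\textit{Min}_\leq\llbracket\psi\rrbracket$, which yields \textsc{CB}. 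Note, for what it is worth, that the paper states this proposition without any proof, so there is nothing on the paper's side to compare against; the assessment below is of your argument on its own terms.

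The genuine gap is in your left-to-right direction, and it is not mere bookkeeping. Any $G'$ of the stated form can only make the $\mu_G(\psi)$-worlds \emph{more} preferred, never less: to break $w'\leq w$ for $w'\vDash\mu_G(\psi)$ and $w\not\vDash\mu_G(\psi)$ one would need a node true at $w$ and false at $w'$ with no surviving witness, but the only genuinely new nodes are equivalent to $\mu_G(\psi)$ (false at $w$), and condition (2b) guarantees every old witness survives; symmetrically, the mutual order of the $\mu_G(\psi)$-worlds is frozen. Yet \textsc{CB} alone says nothing about pairs involving minimal $\psi$-worlds: the operator that moves $\textit{Min}_\leq\llbracket\psi\rrbracket$ strictly \emph{above} all other worlds and leaves the rest untouched satisfies \textsc{CB}, is induced by a relevant grounded transformation (send each grounded $G$ to the canonical grounded P-graph of the revised model, which is well defined because a grounded P-graph induces a unique model), and its output is $\varphi$-equivalent to no $G'$ of the stated form. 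This also explains why your placement rule --- put $\mu_G(\psi)$ ``below exactly those nodes under which $\star$ moves the minimal $\psi$-worlds'' --- is not well defined: $\star$ repositions worlds relative to worlds, and distinct minimal $\psi$-worlds may be moved differently, whereas the single node $\mu_G(\psi)$ acts uniformly on all of them. The direction becomes provable under the standing hypothesis of Proposition~\ref{prop:CB2} that the operator also satisfies \textsc{Faith}: then \textsc{Faith} plus \textsc{CB} pin the revision down to Natural Revision, the witness is simply $G$ with $\mu_G(\psi)$ inserted $\prec_\dagger$-below every node of $\Phi$, and verifying (1)--(2) and the induced order is routine. As written, you should either import \textsc{Faith} explicitly into the hypothesis or exhibit why the biconditional holds without it; your current construction does neither.
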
     

Similarly, we can characterise \textsc{Faith} using grounded P-graph transformations:

\begin{prop}
Let $\dagger:\mathbb{G}(P)\times\mathcal{L}_0(P) \rightarrow \mathbb{G}(P)$ be a relevant grounded P-graph transformation. $\dagger$ satisfies \textsc{Faith} if for all grounded P-graph $G=\langle \varphi, \Phi, \prec\rangle$ and propositional formula $\psi\in \mathcal{L}_0$, it holds that $\dagger(G,\psi)$ is $\varphi$-equivalent to the grounded P-graph $G' = \langle \varphi, \Phi_\dagger, \prec_\dagger\rangle$ satisfying:
\begin{enumerate}
\item $\mu_G(\psi) \in \Phi_\dagger$;
\item For all $\xi\in \Phi$, $\mu_G(\psi)\prec_\dagger \xi$.
\end{enumerate}
\end{prop}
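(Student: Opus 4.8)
The plan is to fix an arbitrary induced dynamic operator and reduce the postulate \textsc{Faith} to a computation of the minimal worlds of the model induced by $G'$. Since $\dagger$ is relevant (Definition~\ref{def:relevant}), there is a dynamic operator $\star$ induced by it. I would fix a grounded P-graph $G=\langle\varphi,\Phi,\prec\rangle$ and a formula $\psi$, and let $M=\langle\llbracket\varphi\rrbracket_0,\leq,v\rangle$ be the unique model induced by $G$ (Definition~\ref{def:model2}). By the definition of an induced operator (Definition~\ref{def:induced}), $M_{\star\psi}$ is induced by $\dagger(G,\psi)$; moreover $M_{\star\psi}$ has world set $\llbracket\varphi\rrbracket_0$ because $\star$ does not change the domain, so every world of $M_{\star\psi}$ satisfies $\varphi$. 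Using the hypothesis $\dagger(G,\psi)\equiv_\varphi G'$, I would then conclude that $M_{\star\psi}$ is also the model induced by $G'$. Writing $\leq'$ for the relation of the model induced by $G'$, this reduces the goal to showing $\textit{Min}_{\leq'}\llbracket\varphi\rrbracket_0 = \textit{Min}_\leq\llbracket\psi\rrbracket$.

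The heart of the argument is the claim that the $\leq'$-minimal worlds of the model induced by $G'$ are exactly the worlds satisfying $\chi_0 := \mu_G(\psi)$. I would prove both inclusions directly from Definition~\ref{def:model}, using conditions (1) and (2), which make $\chi_0$ a $\prec_\dagger$-minimum node of $G'$. For one inclusion, if $w\vDash\chi_0$ and $w''\leq' w$, then the clause of Definition~\ref{def:model} for $\chi=\chi_0$ forces $w''\vDash\chi_0$, since there is no node below $\chi_0$ to serve as a distinguishing witness. Restricted to worlds satisfying $\chi_0$, the node $\chi_0$ can never discriminate two such worlds, so $\leq'$ coincides there with the original relation $\leq$ of $M$; and no minimal $\psi$-world of $M$ lies $\leq$-strictly below another, so no $\chi_0$-world lies $\leq'$-strictly below another. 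Hence every $\chi_0$-world is $\leq'$-minimal. For the converse inclusion, if $w''\not\vDash\chi_0$, then for any $\chi_0$-world $w$ one checks $w<' w''$ by taking $\chi_0\prec_\dagger\xi$ as the required witness in the clause for each $\xi\in\Phi$ (and the clause for $\chi_0$ itself holds vacuously), so $w''$ is not $\leq'$-minimal.

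Finally, I would invoke Proposition~\ref{prop:minimal} applied to $G$, which gives $\{w\in\llbracket\varphi\rrbracket_0 : w\vDash\mu_G(\psi)\}=\textit{Min}_\leq\llbracket\psi\rrbracket$. Combined with the claim, this yields $\textit{Min}_{\leq'}\llbracket\varphi\rrbracket_0=\textit{Min}_\leq\llbracket\psi\rrbracket$, which is exactly \textsc{Faith} for $\star$; since the reasoning is independent of which induced operator was chosen, $\dagger$ satisfies \textsc{Faith}. The degenerate case where $\psi$ is inconsistent with $\varphi$ would be noted separately: there $\mu_G(\psi)$ is unsatisfiable over $\llbracket\varphi\rrbracket_0$ and both sides are empty, so the postulate holds trivially.

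The main obstacle is the claim of the second paragraph, and in particular its minimality direction. The subtlety is that the minimal $\psi$-worlds of $M$ need not form a single cluster: they may be pairwise incomparable and satisfy different nodes of $\Phi$. The computation must therefore show that prepending $\chi_0$ beneath every node neither creates a new strict comparison between two $\chi_0$-worlds—which hinges on the observation that a node cannot discriminate worlds that both satisfy it, so $\leq'$ agrees with $\leq$ on the $\chi_0$-worlds—nor leaves any $\chi_0$-world dominated by a non-$\chi_0$-world. I would also make precise the reading of conditions (1)--(2) under which $\chi_0$ is a genuine global $\prec_\dagger$-minimum of $G'$, i.e. that $\Phi_\dagger$ introduces no node incomparable with or below $\chi_0$, since this is what guarantees that $\chi_0$ admits no distinguishing witness beneath it and thus carries the minimality argument.
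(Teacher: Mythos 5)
The paper states this proposition without any proof, so there is no official argument to compare against; judged on its own terms, your reconstruction is sound in its main line and is exactly the argument the paper's machinery is set up to deliver: reduce \textsc{Faith} to the identity $\textit{Min}_{\leq'}\llbracket\varphi\rrbracket_0=\llbracket\mu_G(\psi)\rrbracket$ by direct computation with Definition~\ref{def:model}, then close with Proposition~\ref{prop:minimal}. Your two structural worries are also the right ones to raise: conditions (1)--(2) as printed neither force $\mu_G(\psi)$ to be a global $\prec_\dagger$-minimum of $\Phi_\dagger$ nor fix $\prec_\dagger$ on the old nodes. If $\Phi_\dagger$ contained a node below or incomparable with $\mu_G(\psi)$, a world falsifying $\mu_G(\psi)$ could remain $\leq'$-minimal; and if $\prec_\dagger$ added comparabilities among nodes of $\Phi$, one minimal $\psi$-world could fall $\leq'$-strictly below another (your ``$\leq'$ agrees with $\leq$ on the $\mu_G(\psi)$-worlds'' step tacitly uses $\prec_\dagger{\restriction}_\Phi=\prec$, which the literal statement does not grant). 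So the proposition holds only under the intended reading --- $G'$ is $G$ with $\mu_G(\psi)$ prepended beneath every node and nothing else changed --- which is the reading your computation correctly works with.

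Two blemishes. First, your degenerate case is resolved the wrong way: if $\psi$ is inconsistent with $\varphi$ then indeed $\mu_G(\psi)\equiv\bot$ and $\textit{Min}_\leq\llbracket\psi\rrbracket=\emptyset$, but $\textit{Min}_{\leq'}\llbracket\varphi\rrbracket_0$ is \emph{not} empty: $\varphi$ is consistent by Definition~\ref{def:groundedpgraph}, and $\leq'$ is a reflexive--transitive relation on the finite nonempty set $\llbracket\varphi\rrbracket_0$, so minimal worlds always exist. Hence \textsc{Faith} fails outright for such $\psi$ --- for every dynamic operator, not just induced ones --- and the proposition must be read as implicitly restricted to $\psi$ satisfiable together with $\varphi$; ``both sides are empty'' is false. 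Second, to conclude that an induced operator $\star$ satisfies \textsc{Faith} you must verify it at \emph{every} preference model, whereas your argument only covers models induced by some grounded P-graph. The gap is easily closed, but should be closed explicitly: any conditionally-grounded model $M=\langle W,\leq,v\rangle$ is induced by the grounded P-graph whose P-graph component is the canonical one from Theorem~\ref{teo:pgraph} and whose grounding is the characteristic formula of $W$ (available since $P$ is finite and $W\subseteq 2^P$), so quantifying over all $G$ does exhaust all models.
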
 

Notice that while in this work we were able to provide characterisations of both \textsc{CB} and \textsc{Faith} using grounded P-graphs, giving similar representations to belief contractions postulates \cite{ramachandran2012three} would be considerably more difficult. The reason for this is that contraction postulates describe constraints that are more fine-grained than those described by revision postulates. As a result, commonly, the restrictions imposed by such postulates would be described by properties on the paths in the resulting grounded P-graph. 

Given the space constraints, we will not explore the representation of these operations in this work, but we do point out that a characterisation of Lexicographic Contraction using P-graphs - which works for our models - has been provided by Souza et al.~\citey{souzakr} in their investigation on contraction operations using DPL. As such, this codification can provide clues for a characterisation of the contraction postulates using graph transformations. 

\section{Final Considerations}
\label{sec:final}

This work has investigated changes in the semantics of Dynamic Preference Logic and Priority Graphs to tackle the expressiveness gap for dynamic properties between dynamic operators over preference models and P-graph transformations. As such, this work can be seen as a step further in the attempt to provide a semantic foundation for the study on Relational Belief Change using Dynamic Preference Logic, as done by Girard et al.~\citey{girard2008modal,girard2014belief} and Souza et al.~\citey{souzakr,souza:dali,souzaaaai2019}. 

Notice that the class of models used in this work is not the same classes used in previous works \cite{girard2008modal,souza:dali}. As highlighted before, one of the contributions of this work is precisely investigating an appropriate class of models for DPL that could give rise to good representation results employing (grounded) priority graphs. Notice that the class of models used by us is closely related to Grove's ~\citey{grove} models of AGM Belief Change.

From an epistemological point of view, since possible worlds are interpreted as epistemically possible, our models are capable of representing the notion of an agent having some knowledge \textit{about} the world (what is epistemically necessary) and her beliefs regarding the state of the world. The restriction in Definition~\ref{def:prefmod} that $W\subseteq 2^P$ states that each possible state of affairs is identified about what is true on the observable properties of the world (represented by propositional symbols). This means that the agent cannot conceive two different state of affairs that are phenomenically identical. This may have important implications in the representation power of our logic regarding auto-epistemic phenomena. This fact highlights the importance of investigating introspective phenomena within Belief Revision Theory to understand the expressive limitations of the theory and its postulates.

As future work, we intend to study how our framework connects to the study of Non-Monotonic Belief Change, as studied by \citeonline{casini2018semantic}. Since preference models can be used to define conditional preferences and non-monotonic rules, we believe our semantic framework is ideal for providing a semantic perspective on the work of these authors. This connection is important to understand  reasoning about change based on non-monotonic rules, such as in the case of goal-oriented reasoning in agent programs \cite{vanriemsdijk}.

We point out that, while we focused on the study of belief changing operations, specifically those that do not the change the agent's knowledge about the world, the results obtained here point to the fact that this framework can be used to study more general belief change operations, such as Public Announcements and those studied by Girard and Rott by means of General Dynamic Dynamic Logic programs  \cite{girard2014belief}. It is not clear, however, if  this approach could be connected with the study of more general relation changing operations available in the literature, such as those studied by Areces et al.~\cite{areces2015relation}.

\bibliographystyle{splncs04}
\bibliography{relatorio}

\begin{thebibliography}{10}
\providecommand{\url}[1]{\texttt{#1}}
\providecommand{\urlprefix}{URL }
\providecommand{\doi}[1]{https://doi.org/#1}

\bibitem{AGM}
Alchourr{\'o}n, C.E., G{\"a}rdenfors, P., Makinson, D.: On the logic of theory
  change: Partial meet contraction and revision functions. Journal of Symbolic
  Logic  \textbf{50}(2),  510--530 (1985)

\bibitem{andersen2017bisimulation}
Andersen, M.B., Bolander, T., van Ditmarsch, H., Jensen, M.H.: Bisimulation and
  expressivity for conditional belief, degrees of belief, and safe belief.
  Synthese  \textbf{194}(7),  2447--2487 (2017)

\bibitem{andreka2002operators}
Andr{\'e}ka, H., Ryan, M., Schobbens, P.Y.: Operators and laws for combining
  preference relations. Journal of logic and computation  \textbf{12}(1),
  13--53 (2002)

\bibitem{areces2015relation}
Areces, C., Fervari, R., Hoffmann, G.: Relation-changing modal operators. Logic
  Journal of the IGPL  \textbf{23}(4),  601--627 (2015)

\bibitem{BAL08}
Baltag, A., Smets, S.: A qualitative theory of dynamic interactive belief
  revision. Texts in logic and games  \textbf{3},  9--58 (2008)

\bibitem{boutilier1993revision}
Boutilier, C.: Revision sequences and nested conditionals. In: Proceedings of
  the 13th International Joint Conference on Artificial Intelligence. vol.~93,
  pp. 519--531. Morgan Kaufmann, New York, US (1993)

\bibitem{casini2018semantic}
Casini, G., Ferm{\'e}, E., Meyer, T., Varzinczak, I.: A semantic perspective on
  belief change in a preferential non-monotonic framework. In: Sixteenth
  International Conference on Principles of Knowledge Representation and
  Reasoning (2018)

\bibitem{darwiche}
Darwiche, A., Pearl, J.: On the logic of iterated belief revision. Artificial
  intelligence  \textbf{89}(1),  1--29 (1997)

\bibitem{ferme:JLC17}
Ferm\'{e}, E., Garapa, M., Reis, M.D.L.: On ensconcement and contraction.
  Journal of Logic and Computation  \textbf{27}(7),  2011--2042 (2017)

\bibitem{ferme:JSL08}
Ferm\'{e}, E., Krevneris, M., Reis, M.: An axiomatic characterization of
  ensconcement-based contraction. Journal of Symbolic Logic  \textbf{18}(5),
  739--753 (2008)

\bibitem{lehmanfreund}
Freund, M., Lehmann, D.: Belief revision and rational inference. Tech. Rep.
  94-16, Leibniz Center for Research in Computer Science, Institute of Computer
  Science, Hebrew University of Jerusalem (1994)

\bibitem{gardenfors1988revisions}
G{\"a}rdenfors, P., Makinson, D.: Revisions of knowledge systems using
  epistemic entrenchment. In: Proceedings of the 2nd conference on Theoretical
  aspects of reasoning about knowledge. pp. 83--95. Morgan Kaufmann Publishers
  Inc. (1988)

\bibitem{georgatos1999preference}
Georgatos, K.: To preference via entrenchment. Annals of Pure and Applied Logic
   \textbf{96}(1-3),  141--155 (1999)

\bibitem{girard2008modal}
Girard, P.: Modal logic for belief and preference change. Ph.D. thesis,
  Stanford University (2008)

\bibitem{girard2014belief}
Girard, P., Rott, H.: Belief revision and dynamic logic. In: Johan Van Benthem
  on Logic and Information Dynamics, pp. 203--233. Springer, New York, US
  (2014)

\bibitem{grove}
Grove, A.: Two modelings for theory change. Journal of philosophical logic
  \textbf{17}(2),  157--170 (1988)

\bibitem{Hansson}
Hansson, S.O.: In defense of base contraction. Synthese  \textbf{91}(3),
  239--245 (1992)

\bibitem{hansson1994kernel}
Hansson, S.O.: Kernel contraction. The Journal of Symbolic Logic
  \textbf{59}(3),  845--859 (1994)

\bibitem{JIN06}
Jin, Y., Thielscher, M.: Iterated belief revision, revised. Artificial
  Intelligence  \textbf{171}(1),  1--18 (2007)

\bibitem{levesque1984logic}
Levesque, H.J.: A logic of implicit and explicit belief. In: Proceedings of the
  Fourth National Conference on Artificial Intelligence. pp. 198--202. AAAI
  Press, Palo Alto, US (1984)

\bibitem{lewis}
Lewis, D.: Counterfactuals. John Wiley \& Sons, Hoboken, US (2013)

\bibitem{lindstrom:rabinowicz}
Lindstr{\"o}m, S., Rabinowicz, W.: {DDL} unlimited: Dynamic doxastic logic for
  introspective agents. Erkenntnis  \textbf{50}(2),  353--385 (1999)

\bibitem{liu2011reasoning}
Liu, F.: Reasoning about preference dynamics, vol.~354. Springer, New York, US
  (2011)

\bibitem{lorini2018praise}
Lorini, E.: In praise of belief bases: Doing epistemic logic without possible
  worlds. In: Thirty-Second AAAI Conference on Artificial Intelligence (2018)

\bibitem{nayak2003dynamic}
Nayak, A.C., Pagnucco, M., Peppas, P.: Dynamic belief revision operators.
  Artificial Intelligence  \textbf{146}(2),  193--228 (2003)

\bibitem{ramachandran2012three}
Ramachandran, R., Nayak, A.C., Orgun, M.A.: Three approaches to iterated belief
  contraction. Journal of Philosophical Logic  \textbf{41}(1),  115--142 (2012)

\bibitem{mreis:igpl14}
Reis, M.D.L.: On the interrelation between systems of spheres and epistemic
  entrenchment relations. Logic Journal of the IGPL  \textbf{22}(1),  126--146
  (2014)

\bibitem{ROT98}
Rott, H.: `{Just Because}': Taking belief bases seriously. In: Samuel R.~Buss,
  P.H.u.P.P. (ed.) Lecture Notes in Logic. vol.~13, pp. 387--408. Association
  for Symbolic Logic, Urbana, US (1998)

\bibitem{souza:dali}
Souza, M., Moreira, {\'A}., Vieira, R.: {D}ynamic {P}reference {L}ogic as a
  logic of belief change. In: International Workshop on Dynamic Logic. pp.
  185--200. Springer (2017)

\bibitem{souzaaaai2019}
Souza, M., Moreira, A., Vieira, R.: Iterated belief base change: A dynamic
  epistemic logic approach. In: Proceedings of the Thirty-Third AAAI Conference
  on Artificial Intelligence. AAAI Press, Palo Alto, US (2019), (To appear)

\bibitem{souzakr}
Souza, M., Moreira, A., Vieira, R., Meyer, J.J.C.: Preference and priorities: A
  study based on contraction. In: KR 2016. pp. 155--164. AAAI Press (2016)

\bibitem{van2007dynamic}
Van{ }Benthem, J.: Dynamic logic for belief revision. Journal of Applied
  Non-Classical Logics  \textbf{17}(2),  129--155 (2007)

\bibitem{liu:deontics}
Van{ }Benthem, J., Grossi, D., Liu, F.: Priority structures in deontic logic.
  Theoria  \textbf{80}(2),  116--152 (2014)

\bibitem{vanriemsdijk}
Van{ }Riemsdijk, M.B., Dastani, M., Meyer, J.J.C.: Goals in conflict: semantic
  foundations of goals in agent programming. Autonomous Agents and Multi-Agent
  Systems  \textbf{18}(3),  471--500 (2009)

\end{thebibliography}

\end{document}